\newcommand{\remove}[1]{}
\newtheorem{definition}{Definition}[section]
\newtheorem{lemma}{Lemma}[section]
\newtheorem{theorem}{Theorem}[section]
\newenvironment{proof}{\vspace{8pt}
\noindent{\bf Proof}: }{{\hfill {\large $\Box$}} \vspace{8pt}}
\begin{document}

\bibliographystyle{plain}

\title{Private Outsourcing of Polynomial Evaluation
and Matrix Multiplication using Multilinear
Maps}

\author{Liang Feng Zhang and Rehanehi Safavi-Naini\\
Institute for Security, Privacy and Information Assurance\\
Department of Computer Science\\
University of Calgary\\
\{liangf.zhang,~rei.safavi\}@gmail.com}

\date{}

\maketitle

\begin{abstract}
{\em Verifiable computation} (VC)  allows a computationally weak client to outsource
 the evaluation  of a  function on many inputs to a powerful but untrusted server.
The  client invests a large
amount of off-line computation and gives   an encoding of
its  function to the server.
The server returns both an evaluation of the function on the client's input
and a  proof
such that the client can verify the evaluation using
substantially less effort than  doing the evaluation on its own.
We consider how to privately outsource computations using {\em privacy preserving} VC schemes
whose executions reveal no information on the client's input or function
to the server.
We construct  VC schemes with  {\em input  privacy}
for univariate polynomial evaluation and matrix multiplication and then
extend them such that the {\em function privacy}
is also achieved.
Our  tool is the recently
developed {mutilinear maps}.
The proposed VC schemes can be used in
outsourcing  {private information
retrieval (PIR)}.
\end{abstract}

\section{Introduction}

 The rise of cloud computing in recent years  has made outsourcing of storage and computation a reality
with many cloud service providers offering attractive services.
Large computation can hugely impact resources (e.g. battery) of weak clients.
Outsourcing computation removes this bottleneck but also raises a natural question:
how to assure  the computation is  carried out correctly as the server is untrusted.
This assurance is not only against malicious behaviors  but also
 infrastructure  failures  of the server.
 Verifiable computation (VC) \cite{GGP10}
 provides such   assurances  for
a large class of computation delegation scenarios.
The client in this model invests a  large amount of off-line computation
and generates an encoding of its function $f$.
Given the encoding and any input $\alpha$,
the server  computes  and  responds with  $y$ and a proof that $y=f(\alpha)$.
  The client can verify  if the computation has been carried out correctly using
substantially less effort than computing $f(\alpha)$ on its own.
In particular, the client's off-line computation cost  is {\em amortized}
over the evaluations of $f$ on multiple inputs $\alpha$.

VC schemes were  formally defined  by
Gennaro, Gentry and Parno
 \cite{GGP10} and then constructed   for a variety
 of computations \cite{CKV10,BGV11,PRV12,BF12,PST11,FG12,FG12a}.
We say that  a VC scheme is {\em privacy preserving} if its execution
reveals no information on the client's input or function to
the server.
Protecting  the client's input and function from the server is  an essential requirement in
many real-life scenarios.
%After all, if the client does not trust the server to compute  correctly, why
% would he trust the server with the knowledge of its input and function?
For example,  a health professional querying  a database of medical records
may need to protect both the identity and the record  of his patient.
 VC schemes  with input privacy have been considered
 in \cite{GGP10,BF12} where a generic  function is written  as a circuit, and
each gate is evaluated using a fully homomorphic encryption scheme (FHE).
These VC schemes evaluate the outsourced  functions as circuits and are not very efficient. Furthermore, the outsourced functions  are given
to the server in clear and therefore  the function privacy is not achieved.
Benabbas, Gennaro and Vahlis \cite{BGV11} and several other works
\cite{FG12,FG12a,PST11} design  VC schemes for specific functions without using FHE.
Some of them even achieve function privacy.
However, they do not
consider the input privacy.

\subsection{Results and Techniques}
\label{section:results}

In this paper, we consider {privacy preserving} VC schemes for specific function evaluations
without using FHE.
The function evaluations  we study include
 univariate polynomial evaluation and matrix multiplication.
Our privacy definition is
  indistinguishability based and  guarantees  no
untrusted server can distinguish between different inputs or functions of the client.
In privacy preserving VC  schemes  both the client's input and function
must be hidden (e.g., encrypted)
from the server  and
   the server must  evaluate the hidden function
   on the hidden input and then  generate a proof that the evaluation
has been carried out correctly.
   Note that such a proof  can be generated using the
    {non-interactive} proof or argument systems
    from  \cite{Mic94,BCCT12} but they  require the
    use of either {random oracles} or {knowledge of exponent} (KoE) type assumptions,
 both of which are considered
{as strong \cite{PST11}} and  have been   carefully avoided  by the VC literatures \cite{GGP10,BGV11,PRV12}.

We construct  VC schemes for univariate polynomial evaluation and
 matrix multiplication that achieve  input privacy
and then extend them such that the function privacy is also achieved.
Our main tool  is the  multilinear maps \cite{GGH12,SW12}.
Very recently, Garg, Gentry, and Halvei \cite{GGH12}
proposed  a
candidate mechanism that would approximate or be the moral equivalent of
multilinear maps for many applications.
In \cite{SW12}, it was believed  that the mechanism opens an exciting opportunity to
study new constructions using a multilinear map abstraction.
Following  \cite{SW12}, we use a framework of leveled multilinear maps
where one can call a group
generator ${\cal G}(1^\lambda,k)$ to obtain
a sequence of groups $G_1,\ldots, G_k$ of order
$N$ along with their  generators $g_1,\ldots, g_k$, where
$N=pq$ for two $\lambda$-bit primes $p$ and $q$.
Slightly abusing notation, if  $i+j\leq k$,
 we can compute a bilinear map operation
on $g_i^a\in G_i, g_j^b\in G_j$ as $e(g_i^a, g_j^b)=g_{i+j}^{ab}$.
 These maps can be seen as implementing
a {\em $k$-multilinear map}.
%
%.
%
%
%
%Following  \cite{SW12}, we assume the existence of a group generator
%$\mathcal{G}$,
%which takes as input a security parameter $\lambda$ and a
%positive integer $k\geq 2$, and outputs a sequence of groups $G_1,\ldots, G_k$ of order
%$N$ along with their  generators $g_1,\ldots, g_k$, where
%$N=pq$ for two $\lambda$-bit primes $p$ and $q$;
%we also assume the  existence of a {\em $k$-multilinear} map $e$
%such that
%$
%e(g_i^a, g_j^b)=g_{i+j}^{ab}
%$
%for every $i,j\in [k]$, where $i+j\leq k$ and $a,b\in \mathbb{Z}_N$.
We denote by
\begin{equation}
\label{equation:gammak}
\Gamma_k=(N,G_1,\ldots, G_k, e, g_1,\ldots, g_k)\leftarrow\mathcal{G}(1^\lambda, k)
\end{equation}
 a random $k$-multilinear map instance, where $N=pq$ for two
 $\lambda$-bit primes $p$ and $q$.
 We start with the BGN encryption scheme (denoted by ${\rm BGN}_{2}$)
 of Boneh, Goh and Nissim \cite{BGN05}
which is based on $\Gamma_2$ and
semantically secure when the subgroup decision assumption (abbreviated as SDA,
see Definition \ref{definition:sda}) for $\Gamma_2$ holds.
It is well-known that  ${\rm BGN}_2$ is both additively homomorphic and
multiplicatively homomorphic, i.e.,  given ${\rm BGN}_2$
ciphertexts  ${\sf Enc}(m_1)$ and ${\sf Enc}(m_2)$ one can easily compute
 ${\sf Enc}(m_1+m_2)$ and ${\sf Enc}(m_1m_2)$.
 Furthermore, ${\rm BGN}_2$ supports an unlimited number of
 additive homomorphic operations:  for any integer
 $k\geq 2$, given ${\rm BGN}_2$ ciphertexts ${\sf Enc}(m_1), \ldots, {\sf Enc}(m_k)$
 one can easily compute ${\sf Enc}(m_1+\cdots +m_k)$.
As a result, one can easily
compute ${\sf Enc}(f(\alpha))$ from ${\sf Enc}(\alpha)$ for any
quadratic polynomial $f(x)$.
 On the other hand, ${\rm BGN}_2$ supports only one
 multiplicative homomorphic operation: one cannot compute ${\sf Enc}(m_1m_2m_3)$
 from ${\sf Enc}(m_1), {\sf Enc}(m_2)$ and ${\sf Enc}(m_3)$. In particular,
 one cannot compute ${\sf Enc}(f(\alpha))$ from ${\sf Enc}(\alpha)$ for any
 polynomial $f(x)$ of degree $\geq 3$.
In Section \ref{section:BGN},
we  introduce  ${\rm BGN}_{k}$, which is  a generalization of
${\rm BGN}_2$ over  $\Gamma_k$  and semantically secure
under  the SDA for $\Gamma_k$.
${\rm BGN}_k$   supports both an unlimited number of
additive homomorphic operations and up to $k-1$ multiplicative homomorphic
operations.  As a result, it allows us to  compute
${\sf Enc}(f(\alpha))$ from ${\sf Enc}(\alpha)$ for any degree-$k$
 polynomial $f(x)$.
In  our VC schemes, the client's input and function are encrypted using
${\rm BGN}_{k}$ for a suitable $k$ and the server computes on
the ciphertexts.

\vspace{1mm}
 \noindent
{\bf Polynomial evaluation.}
In Section \ref{section:PE} we  propose a
VC scheme $\Pi_{\rm pe}$  with input privacy (see {Fig.} \ref{figure:vc-pe})
that allows
 the client to outsource the
evaluation of a degree $n$ polynomial $f(x)$ on any input $\alpha$ from a polynomial size domain $\mathbb{D}$.
In \cite{KZG10},
the algebraic property that  there is
a polynomial $c(x)$ of degree $n-1$ such that
 $f(x)-f(\alpha)=(x-\alpha)c(x)$ was applied to construct polynomial commitment schemes.
 Those schemes actually give us a basic VC scheme for univariate polynomial evaluation but  without input privacy.
  Let $e: G_1\times G_1\rightarrow G_2$ be a bilinear map, where $G_1$ and $G_2$
  are cyclic groups of prime order $p$ and $G_1$ is generated by $g_1$.
In the basic VC scheme, the client  makes public
  $t=g_1^{f(s)}$  and  gives $pk=(g_1,g_1^s,\ldots, g_1^{s^n},f(x))$ to the server, where
$s$ is uniformly chosen from $\mathbb{Z}_p$.
To verifiably compute $f(\alpha)$, the client gives
$\alpha$ to the server  and the server
 returns  $\rho=f(\alpha)$ along with
a proof  $\pi=g_1^{c(s)}$.
Finally the client verifies if  $e(t/g_1^\rho,g_1)=e( g_1^s/g_1^\alpha,\pi)$.
The basic VC scheme is secure under the SBDH
assumption \cite{KZG10}. It has  been  generalized in  \cite{PST11}
to construct  VC schemes for multivariate polynomial evaluation.

In  $\Pi_{\rm pe}$,   the
$\alpha$ should be hidden  from the server (e.g., the client  gives ${\sf Enc}(\alpha)$ to the
server)
which makes the server's computation of  $\rho$
and $\pi$ (as in the basic VC scheme)  impossible.
Instead, the best one can expect is to compute  a {ciphertext}
$\rho={\sf Enc}(f(\alpha))$ from ${\sf Enc}(\alpha)$ and
$f(x)$.
This can be achieved  if the underlying encryption scheme $\sf Enc$ is an FHE which we want to avoid.
On the other hand, a proof $\pi$ that the computation of $\rho$
has been carried out correctly should  be given to the client. To the best of our knowledge,
for generating such a proof $\pi$, one may adopt the non-interactive proofs or arguments
of \cite{Mic94,BCCT12}  but those constructions require the  use of either
random oracles or KoE type assumptions which we want to avoid as well.
 Our idea is to adopt the multilinear maps \cite{GGH12,SW12}
which allow the server to homomorphically  compute on
${\sf Enc}(\alpha)$  and $f(x)$ and then generate
$\rho={\sf Enc}(f(\alpha))$.
In $\Pi_{\rm pe}$, the client  picks a $(2k+1)$-multilinear map instance $\Gamma$
as in (\ref{equation:gammak}).  It stores $t=g_1^{f(s)}$ and  gives $\vec{\xi}=(g_1,g_1^s,g_1^{s^2}\ldots, g_1^{s^{2^{k-1}}})$ and $ f(x)$ to the server,  where $k=\log \lceil n+1\rceil$.  It also
sets up ${\rm BGN}_{2k+1}$. In order to verifiably compute
$f(\alpha)$, the client gives $k$ ciphtertexts
$\sigma=(\sigma_1,\ldots, \sigma_k)$ to the server and the server returns
$\rho={\sf Enc}(f(\alpha))$ along with a proof
$\pi={\sf Enc}(c(s))$, where $\sigma_\ell={\sf Enc}(\alpha^{2^{\ell-1}})$
for every $\ell\in [k]$.
Note that  $f(\alpha)$
and $c(s)=(f(s)-f(\alpha))/(s-\alpha)$
are  both polynomials in $\alpha$ and $s$.
In Section \ref{section:BGN}, we show how the server can compute
$\rho$ and
$\pi$ from  $f(x), \sigma$ and
$\vec{\xi}$.
Upon receiving $(\rho,\pi)$, the client decrypts $\rho$ to $y$ and verifies if $e(t/g_1^{y}  , g_{2k}^{p})=
e(g_1^s/g_1^\alpha, \pi^{p})$.
Finally, we can show the security and privacy
of $\Pi_{\rm pe}$ under the assumptions  $(2k+1,n)$-MSDHS  (see Definition
\ref{definition:msdh}) and SDA (see Definition \ref{definition:sda}).

\vspace{1mm}
 \noindent
{\bf Matrix multiplication.}  In Section \ref{section:MM} we
propose a VC scheme $\Pi_{\rm mm}$ with input privacy (see {Fig.} \ref{figure:vc-mm})
that allows the client to outsource the computation of $Mx$ for any
$n\times n$ matrix $M=(M_{ij})$  and vector $x=(x_1,\ldots,x_n)$.
It is
 based on the {algebraic PRFs with closed form efficiency} (firstly defined  by
\cite{BGV11}).
In Section \ref{section:aprf}, we  present
an  algebraic PRF with closed form efficiency ${\sf PRF}_{\rm dlin}=({\sf KG,F})$
over a trilinear map instance $\Gamma$, where
for any secret key $K$ generated by ${\sf KG}$,  ${\sf F}_K$ is a function with domain
$[n]^2$ and range $G_1$.
%Let $\Gamma_3=(N,G_1,G_2,G_3,e,g_1,g_2,g_3)$ be a random
% trilinear map instance chosen as in (\ref{equation:gammak}).
%Our APRFc ${\sf PRF}=({\sf KG}, {\sf F})$ has
%a secret key $K=(\{\alpha_i\}, \{\beta_i\}, \{A_i\}, \{B_i\})$, where
%$\alpha_i,\beta_i\leftarrow \mathbb{Z}_N$ and $A_i,B_i\leftarrow G_1$
%for every $i\in [n]$.  It maps
% any input $(i,j)\in [n]^2$ to ${\sf F}_K(i,j)=A_j^{\alpha_i}B_j^{\beta_i}.$
In  $\Pi_{\rm mm}$, the client gives both $M$ and its
blinded version $T=(T_{ij})$ to the server, where $T_{ij}=g_1^{p^2aM_{ij}}\cdot {\sf F}_K(i,j)$
for every $(i,j)\in[n]^2$ and $a$ is randomly chosen from $\mathbb{Z}_N$ and fixed
for any $(i,j)\in [n]^2$.
It also sets up ${\rm BGN}_3$.
In order to verifiably compute $Mx$,
the client stores $\tau_i=\prod_{j=1}^n e({\sf F}_K(i,j), g_2^{px_j})$
for every $i\in[n]$, where  $\tau_i$ can be efficiently computed using the
 closed form efficiency property of ${\sf PRF}_{\rm dlin}$.
 It gives the ciphertexts $\sigma=({\sf Enc}(x_1),\ldots, {\sf Enc}(x_n))$
  to  the server
 and the server   returns    $\rho_i={\sf Enc}(\sum_{j=1}^n M_{ij}x_j)$  along with
 a proof $\pi_i=\prod_{j=1}^n e(T_{ij},{\sf Enc}(x_j))$ for every $i\in[n]$.
Upon receiving $\rho=(\rho_1,\ldots, \rho_n)$
and $\pi=(\pi_1,\ldots, \pi_n)$, the client can decrypt $\rho_i$ to $y_i$
and verify if $e(\pi_i, g_1^p)=\eta^{py_i}\cdot \tau_i$ for every $i\in[n]$, where $\eta=g_3^{p^2a}$.
Finally, we can show the security and privacy
of $\Pi_{\rm mm}$ under the assumptions 3-co-CDHS (see Definition \ref{definition:3-co-cdhs}), DLIN (see Definition
\ref{definition:3-co-cdhs}) and SDA.

\vspace{1mm}
 \noindent
{\bf Applications.}
Our VC schemes have   applications
in
%say
%(i) verifiable delegation of private keyword search (VDPKS) where the client stores a large database
%with the server and later privately
%queries if a specific keyword appears in it,
%and  (ii)
outsourcing
private information retrieval  where the client stores a large database
$w=w_1\cdots w_n\in \{0,1\}^n$
with the cloud
and later retrieves a bit  without revealing which bit he is interested in.
Outsourcing PIR has practical significance: for example a health professional that
 stores
a  database of medical records with the cloud may want to privately retrieve the record of
a certain patient.
Our VC schemes provide easy solutions for  outsourcing PIR.
A client with database $w$
can outsource a polynomial $f(x)$ to the cloud using
$\Pi_{\rm pe}$,
where $f(i)=w_i$ for every $i\in [n]$. The client can also represent its database
as a $\sqrt{n}\times \sqrt{n}$ matrix $M=(M_{ij})$ and outsource it to the cloud using $\Pi_{\rm mm}$.
Retrieving
any bit $M_{ij}$ can be  reduced to  computing $Mx$ for a 0-1 vector $x\in \{0,1\}^{\sqrt{n}}$ whose $j$-th
bit is 1 and all other bits are 0.
Our indistinguishability based  definition of input privacy (see {Fig.} \ref{figure:experiment})
guarantees that the server cannot learn which bit the client is interested in.

\vspace{1mm}
\noindent
{\bf Discussions.}
Note that
decrypting  $\rho={\sf Enc}(f(\alpha))$ in $\Pi_{\rm pe}$ requires computing
discrete logarithms (see Section \ref{section:BGN}).
Hence, the $f(\alpha)$  should be from a polynomial-size domain
$\mathbb{M}$ since otherwise the client will not be able to decrypt $\rho$
and then verify its correctness.
In fact, this is an inherent limitation of \cite{BGN05}
and  inherited by the generalized
${\rm BGN}$ encryption schemes.
%Similarly,  in $\Pi_{\rm mm}$ the  components of $Mx$ (i.e., $\sum_{j=1}^n M_{ij}x_j$ for all
%$i\in[n]$)
% should be also from a polynomial-size domain.
 However, in Section \ref{section:discussion} we shall see that
 the limitation is only theoretical and does not affect  the
 application of $\Pi_{\rm pe}$
 in outsourcing PIR, where $f(\alpha)\in \{0,1\}$.
 One may also argue that with $f(x)$ and the knowledge of ``$f(\alpha)\in \mathbb{M}$",
the server may learn a polynomial size domain $\mathbb{D}$ where $\alpha$  is
drawn from and therefore guess $\alpha$  with non-negligible probability.
However, due to Definition \ref{definition:privacy}, we  stress that the input
privacy achieved by $\Pi_{\rm pe}$ is indistinguishability based
which   does not contradict to the above argument.
In fact, such a privacy level
 suffices for our applications in outsourcing PIR.
In Section \ref{section:ext},
we  show how to modify $\Pi_{\rm pe}$
 such that $f(x)$ is also hidden  and therefore prevent the
 cloud from learning any information about $\alpha$.
Similar discussions as above are applicable to $\Pi_{\rm mm}$.

\vspace{1mm}
 \noindent
{\bf Extensions.}
In Section \ref{section:ext}, we
 modify  $\Pi_{\rm pe}$ and $\Pi_{\rm mm}$
such that the function privacy  is also achieved.
In the modified schemes $\Pi_{\rm pe}^\prime$ (see {Fig.} \ref{figure:pp-vc-pe})
and $\Pi_{\rm mm}^\prime$ (see {Fig.} \ref{figure:pp-vc-mm}), the  outsourced  functions are
 encrypted and then given to the server.
The basic idea is increasing the multi-linearity by 1 such that
both the server and the client can compute on
encrypted inputs and functions with one more application of the multilinear map
$e$. The modified schemes $\Pi_{\rm pe}^\prime$
and $\Pi_{\rm mm}^\prime$ achieve both input and function privacy.

% Increasing  the multi-linearity of the underlying multilinear map in the modified
%$\Pi_{\rm pe}$ should be increased by 1
%to enable the server's computation (see Section \ref{section:ext}).
%

\subsection{Related Work}

Verifiable computation can be traced back to the work on
  {interactive proofs} or arguments  \cite{GMR85,Mic94}.
In the context of VC,
the {\em non-interactive} proofs or arguments are much more desirable and
 have been considered
in \cite{Mic94,BCCT12} for various computations. However, they use
either {random oracles} or KoE type assumptions.

Gennaro, Gentry and Parno \cite{GGP10}
 constructed the first  non-interactive VC schemes
 without using random oracles    or KoE type assumptions.
Their construction is based on the FHE  and  garbled
circuits.
Also based on FHE,
Chung et al. \cite{CKV10} proposed a VC scheme that requires no public key.
Applebaum et al. \cite{AIK10} reduced  VC
to suitable variants of secure multiparty computation
protocols.
  Barbosa et al.
\cite{BF12} also obtained  VC schemes using  delegatable homomorphic encryption.
Although the input privacy has been explicitly considered in \cite{GGP10,BF12},
those schemes evaluate the outsourced functions as circuits and are
not very efficient. Furthermore,
in these schemes the outsourced functions are  known to the
 server. Hence, they
 do not achieve function privacy.

Benabbas et al.  \cite{BGV11} initiated a line of research
on efficient VC schemes for {specific
function (polynomial) evaluations based on
algebraic PRFs with closed form efficiency.
In particular, one of their VC schemes achieves function privacy.
 Parno et al. \cite{PRV12} initiated a line of
research on public VC schemes for evaluating Boolean formulas, where
 the correctness of
the server's computation can be verified by any client.
Also based on   algebraic PRFs with closed form efficiency, Fiore et al. \cite{FG12,FG12a}  constructed
 public VC schemes for both
 polynomial evaluation and matrix
multiplication.
Using the  the idea of  polynomial commitments
 \cite{KZG10}, Papamanthou et al. \cite{PST11} constructed
 public VC schemes that enable {efficient updates}.
 A common drawback of \cite{BGV11,PRV12,FG12,FG12a,PST11} is that
 the input privacy is not achieved.
 VC schemes with other specific properties
  have also been constructed in
   \cite{GKR08,Mic94,CKLR11,BCCT12,CRR11,CKKC12}.
   However, none of them is privacy preserving.

\vspace{1mm}
 \noindent
{\bf Organization.}
In Section \ref{section:preliminary}, we firstly review
several  cryptographic assumptions related to multilinear
 maps;  then introduce a generalization  of
 the  BGN encryption scheme \cite{BGN05};
 we also recall
algebraic PRFs with closed form efficiency and the formal definition of VC.
In Section \ref{section:schemes}, we present
our VC schemes for univariate polynomial evaluation
and matrix multiplication. In Section \ref{section:applications},
we show applications of our VC schemes in outsourcing PIR.
%In Section \ref{section:ext}, we extend our VC schemes such that the function privacy
%is also achieved.
 Section \ref{section:conclusions} contains some concluding remarks.

\section{Preliminaries}
\label{section:preliminary}

For any finite set $A$, the notation $\omega\leftarrow A$
means that $\omega$ is uniformly  chosen from $A$.    Let $\lambda$ be a security
 parameter. We denote by ${\sf neg}(\lambda)$
the class of negligible  functions  in $\lambda$, i.e., for every constant $c>0$, it is less than
$\lambda^{-c}$ as long as $\lambda$ is large enough.
We  denote by ${\sf poly}(\lambda)$ the class of polynomial functions in $\lambda$.

\subsection{Multilinear Maps and Assumptions}
\label{section:multilinear-map}

%We assume the existence of
%the group generator $\mathcal{G}$ in Section  \ref{section:results}.
%Let $\Gamma_k=(N,G_1,\ldots, G_k, e, g_1,\ldots, g_k)$ be a random $k$-multilinear map
% instance chosen as in (\ref{equation:gammak}).
In this section, we review several cryptographic assumptions concerning
multilinear maps.
Given the $\Gamma_k$ in (\ref{equation:gammak}) and  $x\in G_i$,
 the
 subgroup decision problem in $G_i$  is  deciding whether $x$ is of
 order $p$ or not, where $i\in[k]$.  When $k=2$,  Boneh et al.
  \cite{BGN05} suggested  the
Subgroup Decision Assumption (SDA) which says that the subgroup decision problems
 in $G_1$ and $G_2$ are intractable. In this paper, we make the same assumption but for a general
 integer $k\geq 2$.
 \begin{definition}
 \label{definition:sda}
{\em (${\rm SDA}$)}
 We say that ${\rm SDA}_i$ holds  if for any probabilistic polynomial time {\em (PPT)}
algorithm $\mathcal{A}$,
$|\Pr[\mathcal{A}(\Gamma_k, u)=1]-
\Pr[\mathcal{A}(\Gamma_k,u^q)=1]|
<{\sf neg}(\lambda),$
where the probabilities are taken over $\Gamma_k\leftarrow {\cal G}(1^\lambda,k)$,
 $u\leftarrow G_i$ and ${\cal A}$'s random coins.
 We say that {\em SDA} holds  if
${\rm SDA}_i$ holds for every $i\in [k]$.
\end{definition}
%The following lemma (see the full version \cite{ZS13}  for the proof) implies that   SDA is
% equivalent to ${\rm SDA}_1$.
%\begin{lemma}
%\label{lemma:sda}
%If  ${\rm SDA}_i$ holds, then ${\rm SDA}_j$ holds for every $j=i+1,\ldots,k$.
%\end{lemma}
The following lemma shows that SDA is  equivalent to ${\rm SDA}_1$ (see
Appendix \ref{appendix:sda} for the proof).
\begin{lemma}
\label{lemma:sda}
If  ${\rm SDA}_i$ holds, then ${\rm SDA}_j$ holds for every $j=i+1,\ldots,k$.
\end{lemma}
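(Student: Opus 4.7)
My plan is to prove the lemma by induction on $j$, reducing to showing the single step ${\rm SDA}_i \Rightarrow {\rm SDA}_{i+1}$ for any $i$ with $i+1 \leq k$; iterating then yields the full claim.

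For the one-step reduction I would argue by contrapositive. Suppose $\mathcal{A}$ is a PPT algorithm that breaks ${\rm SDA}_{i+1}$ with non-negligible advantage. I will build a PPT algorithm $\mathcal{B}$ that breaks ${\rm SDA}_i$. On input $(\Gamma_k, u)$, where $u$ is drawn either uniformly from $G_i$ or is of the form $u'^{q}$ for $u' \leftarrow G_i$ (hence uniform in the order-$p$ subgroup of $G_i$), $\mathcal{B}$ computes $v = e(u, g_1) \in G_{i+1}$ and outputs $\mathcal{A}(\Gamma_k, v)$. The key observation is that the pairing preserves the relevant distribution: writing $u = g_i^{r}$, we have $v = g_{i+1}^{r}$, so if $r$ is uniform in $\mathbb{Z}_N$ then $v$ is uniform in $G_{i+1}$, while if $u$ has order $p$ (so $q \mid r$) then $v$ has order $p$ in $G_{i+1}$ and is uniform in the order-$p$ subgroup. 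Thus the two distributions feeding into $\mathcal{A}$ are exactly the two distributions of its own ${\rm SDA}_{i+1}$ challenge, and $\mathcal{B}$ inherits $\mathcal{A}$'s advantage, contradicting ${\rm SDA}_i$.

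Iterating this reduction from $i$ up to any $j \in \{i+1, \ldots, k\}$ (or, equivalently, replacing $g_1$ by $g_{j-i}$ in a single step to go directly from $G_i$ to $G_j$) gives the lemma.

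The only subtle point to check carefully will be that the pairing really maps the uniform distribution on the order-$p$ subgroup of $G_i$ to the uniform distribution on the order-$p$ subgroup of $G_{i+1}$ (and similarly for the full groups); this follows from the fact that the map $r \mapsto g_{i+1}^{r}$ is a bijection $\mathbb{Z}_N \to G_{i+1}$ commuting with the subgroup structure, so it is essentially a formal verification rather than the main obstacle. I do not foresee any genuine difficulty beyond setting up notation carefully.
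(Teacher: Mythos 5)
Your proof is correct and takes essentially the same approach as the paper: the paper's reduction also pairs the ${\rm SDA}_i$ challenge element $u$ with a fixed generator (specifically $g_{j-i}$, your ``direct'' variant) to produce an ${\rm SDA}_j$ challenge, and observes that this bijection $G_i \to G_j$ carries the uniform distribution on $G_i$ (resp.\ on its order-$p$ subgroup) to the uniform distribution on $G_j$ (resp.\ its order-$p$ subgroup).
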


The $k$-Multilinear $n$-Strong  Diffie-Hellman assumption ($(k,n)$-MSDH)
was suggested in \cite{PTT10}: Given
$g_1^s, g_1^{s^2}, \ldots, g_1^{s^n}$ for some $s\leftarrow \mathbb{Z}_N$,
it is difficult for any PPT algorithm to
find $\alpha\in \mathbb{Z}_N\setminus \{-s\}$ and output
$g_k^{1/(s+\alpha)}$.
%In our setting, the groups are of  order $N=pq$. We suggest
%the same assumption   for their order-$q$ subgroups and call it
% $(k,n)$-MSDHS.
%Let $\xi=g_1^{p}$   and $\eta=g_k^{p^k}$.
\begin{definition}
\label{definition:msdh}
{\em ($(k,n)$-MSDH)}
For any PPT
algorithm $\mathcal{A}$,
$\Pr\big[\mathcal{A}(p,q, \Gamma_k, g_1, g_1^s,  \ldots,
g_1^{s^n})=(\alpha, g_k^{\frac{1}{s+\alpha}})\big]\\
<{\sf neg}(\lambda),$
where  $\alpha\in \mathbb{Z}_N\setminus \{-s\}$
and the probability is taken over $\Gamma_k\leftarrow {\cal G}(1^\lambda,k)$,
$s\leftarrow \mathbb{Z}_N$
and ${\cal A}$'s random coins.
\end{definition}
We are able to construct a privacy preserving VC scheme (see Section \ref{section:PE}) for univariate polynomial evaluation which is secure  based on
$(k,n)$-MSDH.
Under the $(k,n)$-MSDH assumption,  the following lemma (see  Appendix \ref{appendix:msdhs} for the proof) shows that either one of the following two problems is difficult for  any
PPT algorithm:
(i) given $g_1, g_1^{s}, \ldots, g_1^{s^n}$ for some $s\leftarrow \mathbb{Z}_N$,
 compute $g_k^{p/s}$;
 (ii) given $g_1, g_1^{s}, \ldots, g_1^{s^n}$ for some $s\leftarrow \mathbb{Z}_N$,
 compute $g_k^{q/s}$.
\begin{lemma}
\label{lemma:msdhs}
If $(k,n)$-MSDH holds, then except for a negligible fraction of the
$k$-multilinear map instances $\Gamma_k\leftarrow {\cal G}(1^\lambda,k)$,
either $\Pr[\mathcal{A}(p,q, \Gamma_k, g_1, g_1^s,  \ldots,
g_1^{s^n})=g_k^{p/s}]
<{\sf neg}(\lambda)$  for any PPT algorithm $\cal A$ or $\Pr[\mathcal{A}(p,q, \Gamma_k,  g_1, g_1^s,  \ldots,
g_1^{s^n})= g_k^{q/s}]
<{\sf neg}(\lambda)$  for any PPT algorithm $\cal A$, where the probabilities are taken over
 $s\leftarrow \mathbb{Z}_N$
and ${\cal A}$'s random coins.
\end{lemma}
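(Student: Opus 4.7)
The plan is to prove the contrapositive: assume that for a non-negligible fraction of $\Gamma_k$ both of the stated problems admit non-negligible-success PPT solvers $\mathcal{A}_p$ (outputting $g_k^{p/s}$) and $\mathcal{A}_q$ (outputting $g_k^{q/s}$), and build a PPT adversary $\mathcal{B}$ that breaks $(k,n)$-MSDH. The key algebraic trick is B\'ezout: since $p$ and $q$ are distinct primes, $\gcd(p,q) = 1$, so the extended Euclidean algorithm yields integers $u, v$ of size $O(\lambda)$ with $up + vq = 1$, in polynomial time from the input $(p,q)$.

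Concretely, on MSDH challenge $(p, q, \Gamma_k, g_1, g_1^s, \ldots, g_1^{s^n})$, $\mathcal{B}$ forwards the challenge to both $\mathcal{A}_p$ and $\mathcal{A}_q$ with fresh independent coins, obtains candidates $X$ and $Y$, computes $Z = X^u Y^v$, and outputs $(\alpha, Z) = (0, Z)$. Whenever $X = g_k^{p/s}$ and $Y = g_k^{q/s}$, one has $Z = g_k^{(up + vq)/s} = g_k^{1/s} = g_k^{1/(s + 0)}$, which is the required MSDH witness; and since $s \leftarrow \mathbb{Z}_N$ is nonzero (and coprime to $N$) with probability $1 - O(1/p) - O(1/q)$, the choice $\alpha = 0$ lies in $\mathbb{Z}_N \setminus \{-s\}$ with overwhelming probability, so $\mathcal{B}$ wins the $(k,n)$-MSDH experiment in this event. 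Extended Euclidean runs in time polynomial in $\lambda$, so $\mathcal{B}$ is PPT.

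The hard part will be lower bounding $\mathcal{B}$'s joint success probability on a random MSDH challenge, which equals $\mathbb{E}_{\Gamma_k, s}[p_p(\Gamma_k, s)\, p_q(\Gamma_k, s)]$ where $p_p, p_q$ are the per-instance success probabilities of the two subroutines (with coins of $\mathcal{A}_p$ and $\mathcal{A}_q$ averaged out independently). Since the marginals of this joint quantity need not multiply, one must rule out the pathological case in which $\mathcal{A}_p$ and $\mathcal{A}_q$ succeed on essentially disjoint sets of inputs. I would handle this by a Markov/averaging argument driven by the contrapositive hypothesis itself: the failure of the conclusion delivers a non-negligible set $S$ of $\Gamma_k$ on which both $p_p$ and $p_q$ average at least $1/\mathrm{poly}(\lambda)$ over $s$; a further Markov step on each marginal, combined with a counting/pigeonhole argument over $s$, extracts a non-negligible subset of $(\Gamma_k, s) \in S \times \mathbb{Z}_N$ on which both conditional success probabilities are simultaneously at least $1/\mathrm{poly}(\lambda)$, making the joint expectation non-negligible. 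With this, $\mathcal{B}$ breaks $(k,n)$-MSDH with non-negligible probability, contradicting the hypothesis and proving the lemma.
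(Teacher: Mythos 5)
Your reduction uses the right combining step (B\'ezout: $g_k^{1/s} = (g_k^{p/s})^u (g_k^{q/s})^v$ with $up + vq = 1$, then output $(\alpha, Z) = (0, Z)$), but the way you feed the oracles has a genuine gap that your proposed patch does not close. You give $\mathcal{A}_p$ and $\mathcal{A}_q$ the \emph{same} challenge $s$ and then try to argue that the joint success $\mathbb{E}_s[p_p(\Gamma_k,s)\,p_q(\Gamma_k,s)]$ is non-negligible from the fact that the two marginals are. This is exactly the pathological case you flag, and no Markov or pigeonhole argument over $s$ rules it out: the hypothesis gives only $\mathbb{E}_s[p_p] \geq \delta_1$ and $\mathbb{E}_s[p_q] \geq \delta_2$, which is fully consistent with the $s$-sets where $p_p$ is large and where $p_q$ is large being disjoint (e.g.\ one solver works only on $s$ with some fixed low-order bit). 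The elementary bound $\mathbb{E}[p_p p_q] \geq \mathbb{E}[p_p] + \mathbb{E}[p_q] - 1$ is vacuous when $\delta_1,\delta_2 = 1/\mathrm{poly}(\lambda)$, so the contrapositive hypothesis alone does not yield a non-negligible joint expectation.

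The missing idea is \emph{random self-reduction} of the instance. The paper's simulator, given $(p,q,\Gamma_k,g_1,g_1^s,\ldots,g_1^{s^n})$, draws independent $s_1, s_2 \leftarrow \mathbb{Z}_N$, implicitly sets $\hat{s}_1 = s s_1$ and $\hat{s}_2 = s s_2$, and feeds $(g_1, g_1^{\hat{s}_1}, \ldots, g_1^{\hat{s}_1^n})$ to $\mathcal{A}_p$ and $(g_1, g_1^{\hat{s}_2}, \ldots, g_1^{\hat{s}_2^n})$ to $\mathcal{A}_q$ (both tuples are computable from $s_1,s_2$ and the given powers of $g_1^s$). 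Since $s$ is a unit in $\mathbb{Z}_N$ with overwhelming probability, each $\hat{s}_i$ is statistically close to uniform, and $\hat{s}_1,\hat{s}_2$ are (essentially) independent, so the two success events become independent and the joint probability is $\approx \delta_1 \delta_2$ up to negligible slack. One then de-randomizes: raise $\mathcal{A}_p$'s output $g_k^{p/\hat{s}_1}$ to the power $s_1$ to recover $g_k^{p/s}$, likewise recover $g_k^{q/s}$, and finish with your B\'ezout combination. Without this rerandomization your $\mathcal{B}$ is not shown to succeed with non-negligible probability, so the reduction as written does not go through.
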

Due to Lemma \ref{lemma:msdhs}, it looks reasonable to assume that  (i) (resp. (ii)) is  difficult.
Furthermore, under this slightly stronger assumption
(i.e.,  (i) is difficult, called $(k,n)$-MSDHS from now on),
we can construct a VC scheme $\Pi_{\rm pe}$  (see Fig. \ref{figure:vc-pe}) that
is more efficient  than the one based on $(k,n)$-MSDH.
 In Section \ref{section:PE}, we  present the scheme $\Pi_{\rm pe}$
based on  $(k,n)$-MSDHS.
\begin{definition}
\label{definition:msdhs}
{\em ($(k,n)$-MSDHS)}
For any PPT
algorithm $\mathcal{A}$,
$\Pr[\mathcal{A}(p,q, \Gamma_k, g_1, g_1^s,  \ldots,
g_1^{s^n})=g_k^{p/s}]
<{\sf neg}(\lambda),$
where  the probability is taken over $\Gamma_k\leftarrow {\cal G}(1^\lambda,k)$,
$s\leftarrow \mathbb{Z}_N$
and ${\cal A}$'s random coins.
\end{definition}

The
 $k$-Multilinear  Decision Diffie-Hellman assumption ($k$-MDDH) was suggested  in
\cite{GGH12,SW12}: Given
$g_1^s,g_1^{a_1},\ldots, g_1^{a_k}\leftarrow G_1$, it is difficult for any PPT
 algorithm to distinguish between $g_k^{sa_1\cdots a_k}$ and  $h\leftarrow G_k$.
\begin{definition}
\label{definition:decision-k-linear-assumption}
{\em ($k$-MDDH)}
For any
PPT
algorithm $\mathcal{A}$,
$|\Pr[\mathcal{A}(p,q,\Gamma_k,  g_1^s,  g_1^{a_1}, \ldots,  g_1^{a_k},
 g_k^{sa_1\cdots a_k})=1]-
\Pr[\mathcal{A}(p,q,\Gamma_k,g_1^s,  g_1^{a_1},\ldots, g_1^{a_k}, h)=1]|
<{\sf neg}(\lambda),$
where the probabilities are taken over $\Gamma_k\leftarrow {\cal G}(1^\lambda,k)$,  $s, a_1,\ldots, a_k\leftarrow \mathbb{Z}_N,
h\leftarrow G_k$  and ${\cal A}$'s random coins.
\end{definition}
Let $\Gamma_3=(N,G_1,G_2,G_3,e,g_1,g_2,g_3)\leftarrow {\cal G}(1^\lambda,3)$ be a random  trilinear map instance.
 Let  $h_1=g_1^p$ and $h_2=g_2^p$.
The  trilinear co-Computational Diffie-Hellman assumption
for the order $q$ Subgroups (3-co-CDHS) says that
given $h_1^a\leftarrow G_1$ and $h_2^b\leftarrow G_2$, it is difficult for
any PPT algorithm to compute $h_2^{ab}$.
\begin{definition}
\label{definition:3-co-cdhs}
{\em (3-co-CDHS)}
For any PPT algorithm ${\cal A}$,
$
\label{equation:3-linear-cdh}
\Pr[{\cal A}(p,q,\Gamma_3, h_1^a,h_2^b)=h_2^{ab}]
<{\sf neg}(\lambda),
$
where the probability is taken over
$\Gamma_3\leftarrow {\cal G}(1^\lambda,3)$, $a,b\leftarrow \mathbb{Z}_N$ and ${\cal A}$'s random coins.
\end{definition}
% Based on the following technical Lemma
%\ref{lemma:tech},
% Lemma \ref{lemma:3-co-cdhs} shows
% that
% 3-co-CDHS  is implied by 3-MDDH (see the full version  \cite{ZS13} for their proofs).
%\begin{lemma}
%\label{lemma:tech}
%Let $X$ and $Y$  be independent  uniform random variables over  $\mathbb{Z}_N$.
% Then   $Z= pXY\bmod q$
%is {statistically close} to the uniform   random variable
%$U$ over $\mathbb{Z}_q$, i.e.,  we have that
%$\sum_{\omega\in \mathbb{Z}_q}|\Pr[Z=\omega]-\Pr[U=\omega]|<{\sf neg}(\lambda).$
%\end{lemma}
Based on the following technical Lemma \ref{lemma:tech},
Lemma \ref{lemma:3-co-cdhs} shows that 3-co-CDHS is not a new assumption but  weaker than 3-MDDH
(see Appendix \ref{appendix:tech} and  \ref{appendix:3-co-cdhs} for their proofs).
\begin{lemma}
\label{lemma:tech}
Let $X$ and $Y$  be two  uniform random variables over  $\mathbb{Z}_N$.
 Then the random variable  $Z= pXY\bmod q$
is {\em statistically close} to the uniform   random variable
$U$ over $\mathbb{Z}_q$, i.e.,  we have that
$\sum_{\omega\in \mathbb{Z}_q}|\Pr[Z=\omega]-\Pr[U=\omega]|<{\sf neg}(\lambda).$
\end{lemma}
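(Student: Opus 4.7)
\vspace{4pt}
\noindent\textbf{Proof proposal.} The plan is to reduce the claim to the simpler question of how close the product of two uniform elements of $\mathbb{Z}_q$ is to uniform on $\mathbb{Z}_q$, and then count carefully.

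First, I would use the Chinese Remainder Theorem to replace $X$ and $Y$ (uniform on $\mathbb{Z}_N=\mathbb{Z}_{pq}$) by their reductions modulo $q$. Since $N=pq$ with $p,q$ distinct primes, CRT gives independent uniform marginals $X':=X\bmod q$ and $Y':=Y\bmod q$ over $\mathbb{Z}_q$. Because $\gcd(p,q)=1$, the map $w\mapsto pw\bmod q$ is a permutation of $\mathbb{Z}_q$; any permutation preserves statistical distance from the uniform distribution on $\mathbb{Z}_q$. Hence $\Delta(Z,U)=\Delta(X'Y'\bmod q,\,U)$, and it suffices to bound the latter.

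Next I would compute the distribution of $W:=X'Y'\bmod q$ explicitly. For $w=0$, the event $W=0$ occurs iff $X'=0$ or $Y'=0$, so by inclusion-exclusion $\Pr[W=0]=2/q-1/q^2=(2q-1)/q^2$. For $w\in\mathbb{Z}_q\setminus\{0\}$, $W=w$ requires both $X'$ and $Y'$ nonzero, and then $Y'=w/X'$ is uniquely determined by $X'$; summing gives $\Pr[W=w]=\sum_{x\ne 0}(1/q)(1/q)=(q-1)/q^2$.

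Finally I would plug these into the statistical distance:
\begin{equation*}
\sum_{\omega\in\mathbb{Z}_q}\bigl|\Pr[W=\omega]-1/q\bigr|
=\Bigl|\tfrac{2q-1}{q^2}-\tfrac{1}{q}\Bigr|+(q-1)\Bigl|\tfrac{q-1}{q^2}-\tfrac{1}{q}\Bigr|
=\tfrac{q-1}{q^2}+\tfrac{q-1}{q^2}=\tfrac{2(q-1)}{q^2}<\tfrac{2}{q}.
\end{equation*}
Since $q$ is a $\lambda$-bit prime, $q\ge 2^{\lambda-1}$, so $2/q\le 2^{-(\lambda-2)}\in{\sf neg}(\lambda)$, completing the proof. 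I do not anticipate a serious obstacle; the only subtlety is remembering that the extra ``$p$'' in front is harmless because it acts as a bijection on $\mathbb{Z}_q$, so the whole statement reduces cleanly to the elementary fact that multiplying two uniform elements of a prime field produces a distribution biased toward $0$ by only an $O(1/q)$ amount.
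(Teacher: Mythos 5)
Your proposal is correct, and it arrives at exactly the same exact statistical distance, $2(q-1)/q^2$, as the paper, but by a cleaner and more elementary route. The paper works directly in $\mathbb{Z}_N$: it partitions $\mathbb{Z}_N$ into four residue classes according to divisibility by $p$ and by $q$, counts for each $w\in\mathbb{Z}_N$ the number $\mathcal{N}_w$ of pairs $(x,y)$ with $xy\equiv w\pmod N$ (splitting into those four cases), and then separately analyzes the behavior of the map $\sigma(w)=pw\bmod q$ restricted to each of the four pieces to push the counts down to $\mathbb{Z}_q$. You instead observe up front that $q\mid N$ forces $X\bmod q$ and $Y\bmod q$ to be independent and uniform on $\mathbb{Z}_q$, and that the leading factor $p$ is coprime to $q$ and therefore acts as a bijection on $\mathbb{Z}_q$, hence is irrelevant to the statistical distance. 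This collapses the problem to the one-paragraph computation of the distribution of a product of two uniform elements of a prime field, avoiding the four-way case analysis and the counting in $\mathbb{Z}_N^2$ entirely. The paper's route has the minor virtue of exhibiting the intermediate distribution on $\mathbb{Z}_N$ explicitly, but your route is shorter, easier to verify, and makes the key structural fact (the irrelevance of the factor $p$ and of anything happening mod $p$) conceptually transparent rather than emerging as an accident of the bookkeeping. One small point you should make explicit rather than attribute to ``CRT'': uniformity of $X\bmod q$ follows simply from $q\mid N$ (each residue class mod $q$ has exactly $p$ preimages in $\mathbb{Z}_N$), and independence of $X\bmod q$ and $Y\bmod q$ follows from independence of $X$ and $Y$; CRT is not actually needed.
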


\begin{lemma}
\label{lemma:3-co-cdhs}
If {3-MDDH} holds, then  {3-co-CDHS}  holds.
\end{lemma}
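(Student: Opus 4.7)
The plan is to reduce 3-co-CDHS to 3-MDDH by showing that any PPT algorithm $\mathcal{B}$ breaking 3-co-CDHS with non-negligible probability $\epsilon$ yields a PPT distinguisher $\mathcal{A}$ against 3-MDDH of comparable advantage. On input a 3-MDDH instance $(p,q,\Gamma_3,g_1^s,g_1^{a_1},g_1^{a_2},g_1^{a_3},T)$, I would have $\mathcal{A}$ embed the MDDH challenge into a simulated 3-co-CDHS instance, feed it to $\mathcal{B}$, and use $\mathcal{B}$'s response together with $T$ to decide whether $T=g_3^{sa_1a_2a_3}$ or $T$ is uniform in $G_3$.

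For the embedding, set $A=(g_1^{a_1})^p$ and $B=e(g_1^{a_2},g_1^{a_3})^p$. Since $h_1=g_1^p$ and $h_2=g_2^p$, we may write $A=h_1^{a_1}$ and $B=h_2^{a_2 a_3}$, so $(A,B)$ is of the form required by 3-co-CDHS with implicit exponents $a\equiv a_1\pmod q$ and $b\equiv a_2 a_3\pmod q$. Passing $(p,q,\Gamma_3,A,B)$ to $\mathcal{B}$ returns a candidate $C$ that, with probability $\epsilon$, equals $h_2^{ab}=g_2^{p a_1 a_2 a_3}$. Then $\mathcal{A}$ computes $e(g_1^s,C)=g_3^{p s a_1 a_2 a_3}$ and tests whether it equals $T^p$, outputting $1$ (``real'') iff equality holds. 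In the real case $T^p=g_3^{p s a_1 a_2 a_3}$ matches exactly; in the random case $T^p$ is uniform in the order-$q$ subgroup of $G_3$, so the accidental match probability is $1/q={\sf neg}(\lambda)$.

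The step requiring care is arguing that the simulated pair $(A,B)$ is statistically indistinguishable from a genuine 3-co-CDHS challenge, so that $\mathcal{B}$ succeeds with essentially its stated probability on it. The exponent of $A$ is $a_1\bmod q$, which is uniform on $\mathbb{Z}_q$ because $a_1$ is uniform on $\mathbb{Z}_N$ and $\gcd(p,q)=1$. The exponent of $B$ is $a_2 a_3\bmod q$, and by Lemma \ref{lemma:tech} (applied with $X=a_2$, $Y=a_3$, together with the fact that multiplication by $p$ is a bijection modulo $q$) this is statistically close to uniform on $\mathbb{Z}_q$. This is the one nontrivial point; the rest of the reduction is bilinearity bookkeeping. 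Combining everything, $\mathcal{A}$ distinguishes the 3-MDDH challenge with advantage at least $\epsilon-{\sf neg}(\lambda)$, contradicting 3-MDDH and thus establishing the contrapositive.
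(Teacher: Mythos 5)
Your reduction is correct and follows essentially the same route as the paper's: embed the MDDH tuple by projecting $g_1^{a_1}$ and $e(g_1^{a_2},g_1^{a_3})$ into the order-$q$ subgroups to form a simulated 3-co-CDHS instance, invoke Lemma~\ref{lemma:tech} to argue the simulated $B$ (the paper's $v$) is statistically close to a genuine challenge, then pair the solver's output against $g_1^s$ and compare with $T^p$. The only cosmetic difference is that the paper has the distinguisher output a random bit on a mismatch rather than $0$, which changes the bookkeeping but not the conclusion.
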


The Decision LINear assumption (DLIN) has been suggested in
\cite{BBS04} for cyclic groups that admit bilinear maps.
 In this paper, we use the DLIN assumption  on the groups of $\Gamma_3$.
\begin{definition}
\label{definition:dlin}
{\em (DLIN)} Let $G$ be a  cyclic group of order $N=pq$, where $p,q$
are $\lambda$-bit primes.
For any PPT algorithm ${\cal A}$,
$
|\Pr[{\cal A}(p,q,u, v, w,u^a,v^b,  w^{a+b})=1]-
 \Pr[ {\cal A}(p,q,u,v,w,u^a,v^b,w^{c})=1]|
<{\sf neg}(\lambda),
$
where the probabilities are taken over  $u,v,w\leftarrow G$,
$a,b,c\leftarrow \mathbb{Z}_N$ and ${\cal A}$'s random coins.
\end{definition}

\subsection{Generalized BGN Encryption}
\label{section:BGN}

${\rm BGN}_2$  \cite{BGN05}
allows one to  evaluate quadratic polynomials on encrypted inputs (see Section
\ref{section:results}).
Boneh et al. \cite{BGN05} noted that  this property arises from the bilinear map
and a $k$-multilinear map would enable the evaluation of degree-$k$ polynomials
on encrypted inputs.
Let $\mathbb{M}$ be a polynomial size domain, i.e. $|\mathbb{M}|={\sf poly}(\lambda)$.
Below we generalize
${\rm BGN}_2$ and define ${\rm BGN}_{k}=({\sf Gen, Enc, Dec})$ for any $k\geq 2$, where
 \begin{itemize}
 \item $(pk,sk)\leftarrow {\sf Gen}(1^\lambda,k)$ is a key generation algorithm. It picks
 $\Gamma_k$ as in (\ref{equation:gammak}) and then outputs
 both a public key $pk=(\Gamma_k,  g_1,h)$ and  a secret key $sk=p$, where
    $h=u^q$ for $u\leftarrow G_1$.
 \item $c\leftarrow {\sf Enc}(pk,m)$ is an encryption algorithm which encrypts any
   message $m\in \mathbb{M}$ as a ciphertext $c=g_1^mh^r\in G_1$, where  $r\leftarrow \mathbb{Z}_N$.
 \item $m\leftarrow {\sf Dec}(sk,c)$ is a decryption algorithm which takes as input
 $sk$ and a ciphertext $c$,  and outputs a message $m\in \mathbb{M}$ such that $c^p=(g_1^p)^m$.
 \end{itemize}
Note that all algorithms above  are defined
over $G_1$ but in general they can be
defined over $G_i$ for any $i\in[k]$.
This can be done by setting $pk=(\Gamma_k, g_i, h)$ and replacing
any occurrence of $g_1$ with $g_i$, where $h=u^q$ for  $u\leftarrow G_i$.
Similar to  \cite{BGN05},  one can show that ${\rm BGN}_{k}$
is semantically secure  under the {\rm SDA}.

%\vspace{1mm}
%\noindent
%{\bf Multiplicative Homomorphic Property.}
Below we  discuss useful properties of ${\rm BGN}_k$.
For
every integer $2\leq i\leq k$, we define a map $e_i: G_1\times \cdots \times G_1\rightarrow G_i$
such that
$e_i(g_1^{a_1},\ldots, g_1^{a_i})=
g_i^{a_1\cdots a_i}$ for any $a_1,\ldots, a_i\in \mathbb{Z}_N$.
Firstly, we shall see that  ${\rm BGN}_{k}$
allows us to compute ${\sf Enc}(m_1\cdots m_k)$ from
${\sf Enc}(m_1),\ldots, {\sf Enc}(m_k)$. Suppose ${\sf Enc}(m_\ell)=g_1^{m_\ell}h^{r_\ell}$ for every $\ell\in[k]$, where
$h=g_1^{q\delta}$ for some $\delta\in \mathbb{Z}_N$ and $r_\ell\leftarrow \mathbb{Z}_N$.
Let $h_k=e_k(h, g_1,\ldots, g_1)=g_k^{q\delta}$. Then
$e_k({\sf Enc}(m_1),\ldots, {\sf Enc}(m_k))=g_k^m h_k^r$ is
a ciphertext  of
$m=m_1\cdots m_k$ in $G_k$, where
 $r=\frac{1}{q\delta}(\prod_{\ell=1}^k (m_\ell+q\delta r_\ell)-m).$

\vspace{1mm}
\noindent
{\bf Computing $\rho$ with reduced multi-linearity level.}
In $\Pi_{\rm pe}$,
the client gives   a polynomial $f(x)=f_0+f_1x+\cdots+f_n x^n$
and $k$ ciphertexts $\sigma=(\sigma_1,\sigma_2, \ldots, \sigma_k)$  of $\alpha,\alpha^2,\ldots, \alpha^{2^{k-1}}$ under
${\rm BGN}_{2k+1}$ to the server and  the server returns
 $\rho={\sf Enc}(f(\alpha))$,
 where  $k=\lceil \log (n+1) \rceil$. Below we  show how to
  compute the $\rho$ using $\sigma$ and $f(x)$.   Suppose
$
\sigma_\ell=g_1^{\alpha^{2^{\ell-1}}}h^{r_\ell}
$
for every $\ell\in [k]$,
where $h=g_1^{q\delta}$ for some $\delta\in \mathbb{Z}_N$
 and $r_\ell\leftarrow \mathbb{Z}_N$.
Clearly, any $i\in \{0,1,\ldots,n\}$  has a binary representation
$(i_1,\ldots, i_k)$  such that
$
i=\sum_{\ell=1}^k i_\ell 2^{\ell-1}.
$
Then
$\alpha^i=
\alpha^{i_1}\cdot (\alpha^2)^{i_2}\cdots (\alpha^{2^{k-1}})^{i_k}$
is the product of $i_1+\cdots+i_k$ elements of $\{\alpha, \alpha^2,\ldots, \alpha^{2^{k-1}}\}$.
For every $\ell\in [k]$, let
$
\phi_{\ell}=
\sigma_\ell$
if $i_\ell=1$ and
$\phi_{\ell}=g_1
$ otherwise.
  Then  $\rho_i\triangleq e_k(\phi_1,\ldots, \phi_k)=g_k^{\mu_i}=g_k^{m}h_k^r$ is a
    ciphertext of $
m=\alpha^i$  under ${\rm BGN}_{2k+1}$,
where  $\mu_i=\prod_{\ell=1}^k (\alpha^{2^{\ell-1}}+
q\delta r_\ell)^{i_\ell}$ and $r=\frac{1}{q\delta}(\mu_i-m).
$
Thus,  $\rho=\prod_{i=0}^n \rho_i^{f_i}$ is a ciphertext
 of $f(\alpha)$ under ${\rm BGN}_{2k+1}$.

\vspace{1mm}
\noindent
{\bf Computing $\pi$ with reduced multi-linearity level.}
In  $\Pi_{\rm pe}$,
 $k+1$ group elements $\vec{\xi}=(g_1, g_1^s,\ldots,
g_1^{s^{2^{k-1}}})$ are also known to the server as part of the public key,
 where   $s\leftarrow \mathbb{Z}_N$.
The server must return
 $\pi={\sf Enc}(c(s))$  as the proof that $\rho={\sf Enc}(f(\alpha))$
 has been correctly computed.
Below  we  show how to compute $\pi$ using
 $\vec{\xi}$ and
${\sigma}$.
Note that
$
c(s)=(f(s)-f(\alpha))/(s-\alpha)=\sum_{i=0}^{n-1}\sum_{j=0}^i f_{i+1}\alpha^j s^{i-j}.
$
It suffices to show how to
compute
$\pi_{ij}\triangleq{\sf Enc}(f_{i+1}\alpha^js^{i-j})$
for every $i\in \{0,1,\ldots, n-1\}$ and $j\in \{0,1,\ldots,i\}$.
Let
$(j_1,\ldots, j_k),~(i_1,\ldots,
i_k)\in \{0,1\}^k$ be the binary representations of
$j$ and $i-j$, respectively.
Let
$
\phi_\ell=\sigma_\ell
$ if $j_\ell=1$ and $\phi_\ell=g_1$ otherwise.
Let $\psi_\ell=g_1^{s^{2^{\ell-1}}}$
if $i_\ell=1$ and $\psi_\ell=g_1$ otherwise.
Then it is easy to see that $\pi_{ij}=
e(e_k(\phi_1,\ldots, \phi_k), e_k(\psi_1,\ldots, \psi_k))
=g_{2k}^{\nu_{ij}}=g_{2k}^m h_{2k}^r$ is a ciphertext of $m=\alpha^j s^{i-j}$, where
$\nu_{ij}=s^{i-j}\prod_{\ell=1}^k (\alpha^{2^{\ell-1}}+q\delta r_\ell)^{j_\ell}, h_{2k}=g_{2k}^{q\delta}$ and $r=\frac{1}{q\delta}(\nu_{ij}-m)$.
Let $\nu=\sum_{i=0}^{n-1}\sum_{j=0}^i f_{i+1}\nu_{ij}$.
Thus,  $\pi\triangleq g_{2k}^\nu=\prod_{i=0}^{n-1}\prod_{j=0}^i \pi_{ij}^{f_{i+1}}={\sf Enc}(c(s))$.

\subsection{Algebraic PRFs with Closed Form Efficiency}
\label{section:aprf}

In $\Pi_{\rm mm}$, the client gives
both a square matrix $M=(M_{ij})$ of order $n$ and its blinded version
$T=(T_{ij})$ to the server. The computation of $T$ requires an
algebraic PRF with closed form efficiency, which has very efficient algorithms for certain
computations on large data.
   Formally, an algebraic PRF with closed form efficiency  is a pair
  ${\sf PRF}=({\sf KG, F})$, where
 ${\sf KG}(1^\lambda, {\sf pp})$
 generates a secret key $K$ from any public parameter ${\sf pp}$ and
 ${\sf F}_K: I\rightarrow G$ is a function with domain $I$ and range $G$
  (both specified by ${\sf pp}$).
We say that ${\sf PRF}$ has  {pseudorandom} property if for
  any ${\sf pp}$ and any PPT algorithm ${\cal A}$,
  it holds that
$
  |\Pr[{\cal A}^{{\sf F}_K(\cdot)}(1^\lambda,{\sf pp})=1]-
  \Pr[{\cal A}^{{\sf R}(\cdot)}(1^\lambda,{\sf pp})=1]|<{\sf neg}(\lambda),
$
  where the probabilities are taken over the randomness of ${\sf KG}, {\cal A}$ and
   the random function ${\sf R}: I\rightarrow G$.
Consider an arbitrary computation $\sf Comp$ that takes as input
$R=(R_1,\ldots, R_n)\in G^n$ and $x=(x_1,\ldots, x_n)$,
 and assume that the best algorithm to compute
 ${\sf Comp}(R_1,\ldots, R_n, x_1,\ldots, x_n)$ takes time $t$.
Let $z=(z_1,\ldots,z_n)\in
I^n$.
We say that $\sf PRF$ has {closed
form efficiency for} $({\sf Comp}, z)$
if there is an efficient algorithm ${\sf CFE}$ such that
$
{\sf CFE}_{{\sf Comp},z}(K,x)={\sf Comp}({\sf F}_K(z_1),\ldots,
{\sf F}_K(z_n), x_1, \ldots, x_n)
$
and its running time is
 $o(t)$.

\vspace{1mm}
\noindent
{\bf A PRF with closed form efficiency.}
    Fiore et al. \cite{FG12}
constructed an algebraic PRF with closed form efficiency ${\sf PRF}_{\rm dlin}$ based  on the DLIN assumption for
the  bilinear groups. We generalize it over trilinear groups.
In the generalized  setting,  $\mathsf{KG}$ generates
 $\Gamma_3\leftarrow{\cal G}(1^\lambda,3)$,
 picks $\alpha_i,\beta_i\leftarrow \mathbb{Z}_N, ~ A_i, B_i \leftarrow G_1$ for every $i\in [n]$,  and
 outputs  $K=\{\alpha_i, \beta_i, A_i, B_i: i\in [n]\}$.
 The  function $\mathsf{F}_K$ maps any pair $(i,j)\in [n]^2$
to ${\sf F}_K(i,j)=A_j^{\alpha_i}B_j^{\beta_i}$.
The closed form efficiency of ${\sf PRF}_{\rm dlin}$ is described as below.
Let  $x=(x_1,\ldots, x_n)\in \mathbb{Z}_N^n$. The computation $\sf Comp$
we consider is computing  $\prod_{j=1}^n {\sf F}_K(i,j)^{x_j}$ for all $i\in[n]$.
Clearly, it requires $\Omega(n^2)$  exponentiations if no ${\sf CFE}$ is available.
However, one can precompute
$A=A_1^{x_1}\cdots A_n^{x_n}$ and $B=B_1^{x_1}\cdots B_n^{x_n}$ and have that
$\prod_{j=1}^n {\sf F}_K(i,j)^{x_j}=A^{\alpha_i}B^{\beta_i}$ for every $i\in [n]$.
Computing  $A^{\alpha_i}B^{\beta_i}$  requires 2
exponentiations and hence the ${\sf PRF}_{\rm dlin}$ has closed form efficiency
for $({\sf Comp},z)$, where $z=\{(i,j): i,j\in[n]\}$.
The ${\sf PRF}_{\rm dlin}$ in  \cite{FG12} is pseudorandom merely based on the DLIN  for
bilinear groups.
Similarly,  the generalized ${\sf PRF}_{\rm dlin}$  is also pseudorandom based on the DLIN assumption
for trilinear groups. Consequently, we have the following lemma.
\begin{lemma}
If {DLIN} holds in the  trilinear setting, then ${\sf PRF}_{\rm dlin}$ is
an algebraic PRF with closed form efficiency.
\end{lemma}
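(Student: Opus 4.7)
My plan is to verify the lemma's two claims separately: closed-form efficiency and pseudorandomness.

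For closed-form efficiency, I would simply recapitulate the construction sketched immediately before the lemma. The target computation $\sf Comp$ outputs $(\prod_{j=1}^n {\sf F}_K(i,j)^{x_j})_{i\in[n]}$, which naively costs $\Theta(n^2)$ exponentiations in $G_1$. Using the secret key, one precomputes $A=\prod_j A_j^{x_j}$ and $B=\prod_j B_j^{x_j}$ in $2n$ exponentiations, and then obtains each row as $A^{\alpha_i}B^{\beta_i}$ in $2$ more. The total cost $O(n)$ is $o(n^2)$, so the closed-form efficiency definition is satisfied for $({\sf Comp},z)$ with $z=\{(i,j):i,j\in[n]\}$.

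For pseudorandomness, I would invoke the proof of Fiore and Gennaro \cite{FG12} that this PRF is pseudorandom under DLIN in the bilinear setting, and argue that the argument transfers verbatim to the trilinear setting. The key observation is that the DLIN statement in Definition \ref{definition:dlin} is purely a statement about a single cyclic group $G$ (here instantiated as $G_1$ of $\Gamma_3$); it makes no reference to the multilinear map $e$ and does not exploit any property beyond the group operation. Hence no step of the reduction uses the map, and upgrading the group generator from the bilinear to the trilinear setting leaves the argument intact.

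For concreteness, I expect the pseudorandomness reduction to proceed by a hybrid argument over the $n$ rows of the output matrix: hybrid $H_k$ replaces the outputs of rows $1,\ldots,k$ with uniform random elements of $G_1$, so that $H_0$ is the real PRF and $H_n$ is the random function. Each transition $H_{k-1}\to H_k$ is reduced to DLIN by embedding $(\alpha_k,\beta_k)$ as the hidden exponents $(a,b)$ in the DLIN challenge and programming the $A_j,B_j$ from the remaining challenge elements so that the row-$k$ outputs can be produced using the challenge element $T$; the distribution matches $H_{k-1}$ when $T=w^{a+b}$ and $H_k$ when $T$ is uniform. The main obstacle, which is essentially the content of the Fiore--Gennaro argument, is orchestrating the embedding so that the joint distribution of the $A_j,B_j$ and of the simulated outputs in the other $n-1$ rows is correct in both cases; no new idea beyond their bilinear proof is needed to handle the trilinear setting.
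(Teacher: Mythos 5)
Your proposal matches the paper's own (implicit) argument: the paper gives no freestanding proof of this lemma, but justifies it in the two sentences preceding it by exhibiting the closed-form evaluator via precomputation of $A$ and $B$ (exactly as you do) and by asserting that the Fiore--Gennaro pseudorandomness proof under DLIN carries over unchanged from the bilinear to the trilinear setting, which you also do, and moreover you correctly identify the reason it carries over — DLIN as stated in Definition~\ref{definition:dlin} is a statement about a single cyclic group and makes no use of the map $e$. Your speculative row-by-row hybrid sketch goes beyond anything in the paper and would need the Fiore--Gennaro details to be made precise (as written, a programming of $A_j,B_j$ relative to $u,v$ alone would let the simulator compute row $k$ without the challenge element $T$), but since you explicitly defer those details to \cite{FG12}, this does not constitute a gap relative to what the paper itself establishes.
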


\subsection{Verifiable  Computation}
\label{section:vc}

Verifiable computation  \cite{GGP10,BGV11,FG12} is a two-party protocol between  a client
and  a server, where the client  gives  encodings of
its function $f$ and  input $x$ to the server, the server returns an encoding of
$f(x)$ along with a proof, and finally
the client  efficiently  verifies  the
server's computation.
 Formally, a VC scheme $\Pi=
({\sf KeyGen, ProbGen, Compute, Verify})$ is defined by four  algorithms, where
\begin{itemize}
\item $(pk,sk)\leftarrow {\sf KeyGen}(1^\lambda, f)$
takes as input a security parameter $\lambda$ and a function $f$, and generates
both  a public key $pk$ and a secret key $sk$;
\item $(\sigma, \tau)\leftarrow {\sf ProbGen}(sk,x)$
takes as input the secret key $sk$ and an input $x$, and generates both
an encoded input
 $\sigma$  and a verification key $\tau$;
\item $(\rho, \pi)\leftarrow {\sf Compute}(pk,\sigma)$
 takes as input the public key $pk$ and an encoded input $\sigma$, and produces both
an encoded output $\rho$ and a proof $\pi$;
\item $\{f(x),\perp\}\leftarrow {\sf Verify}(sk,\tau,\rho,\pi)$
 takes as input the secret key $sk$, the verification key $\tau$, the
encoded output $\rho$ and a proof $\pi$, and outputs
either $f(x)$  or $\perp$ (which indicates that $\rho$ is not valid).
\end{itemize}

\vspace{1mm}
\noindent
{\bf  Correctness.}
The  scheme $\Pi$ should be correct.
    Intuitively, the scheme $\Pi$ is correct if
     an honest server always  outputs a pair $(\rho,\pi)$
that gives the correct computation result. Let ${\cal F}$ be a family of functions.
\begin{definition}
%{\em (Correctness)}
The  scheme $\Pi$ is  said to be {\em ${\cal F}$-correct} if for any
$f\in {\cal F}$, any $(pk,sk)\leftarrow {\sf KeyGen}(1^\lambda, f)$, any input
$x$  to $f$,
any $(\sigma,\tau)\leftarrow {\sf ProbGen}(sk,x)$,
any $(\rho,\pi)
\leftarrow {\sf Compute}(pk,\sigma)$,
it holds that $f(x)={\sf Verify}(sk,\tau,\rho,\pi)$.
\end{definition}

\vspace{-5mm}
\begin{figure}[h]
\begin{center}
\begin{boxedminipage}{8.5cm}
\begin{enumerate}
\item[] \underline{{\bf Experiment} ${\sf Exp}_{\cal A}^{\sf Ver}(\Pi, f,\lambda)$}
\item $(pk,sk)\leftarrow {\sf KeyGen}(1^\lambda,f)$;
\item for $i=1$ to $l={\sf poly}(\lambda)$ do
\item \hspace{0.5cm}$x_i\leftarrow \mathcal{A}(pk,x_1,\sigma_1,
\ldots, x_{i-1}, \sigma_{i-1})$;
\item \hspace{0.5cm}$(\sigma_i, \tau_i)\leftarrow {\sf ProbGen}(sk,x_i)$;
\item $\hat{x}\leftarrow \mathcal{A}(pk,x_1,\sigma_1, \ldots,x_l,\sigma_l)$
\item $(\hat{\sigma},\hat{\tau})\leftarrow {\sf ProbGen}(sk,\hat{x})$;
\item $(\bar{\rho}, \bar{\pi})\leftarrow \mathcal{A}(pk,x_1,\sigma_1,
\ldots,x_l,\sigma_l,\hat{\sigma})$
\item $\bar{y}\leftarrow {\sf Verify}(sk,\hat{\tau},\bar{\rho}, \bar{\pi})$;
\item output 1 if $\bar{y}\notin \{f(\hat{x}),\perp\}$ and 0 otherwise.
\end{enumerate}
\end{boxedminipage}
\begin{boxedminipage}{8cm}
\begin{enumerate}
\item[] \underline{{\bf Experiment} ${\sf Exp}_{\cal A}^{\sf Pri}(\Pi, f,\lambda)$}
\item $(pk,sk)\leftarrow {\sf KeyGen}(1^\lambda,f)$;
\item $(x_0, x_1)\leftarrow \mathcal{A}^{{\sf PubProbGen}(sk,\cdot)}(pk)$;
\item $b\leftarrow \{0,1\}$;
\item $(\sigma,\tau)\leftarrow {\sf ProbGen}(sk,x_b)$;
\item $b^\prime\leftarrow \mathcal{A}^{{\sf PubProbGen}(sk,\cdot)}(pk, x_0,x_1,\sigma)$
\item output 1 if $b^\prime=b$ and 0 otherwise.
\item[] {\scriptsize Remark:  ${\sf PubProbGen}(sk,\cdot)$
takes as input $x$,
runs $(\sigma,\tau)\leftarrow {\sf ProbGen}(sk,x)$ and returns $\sigma$.}
\vspace{11.25mm}
\end{enumerate}
\end{boxedminipage}
\caption{Experiments for security and privacy \cite{GGP10}}
\label{figure:experiment}
\end{center}
\end{figure}

\vspace{-5mm}
\noindent
{\bf Security.}
The  scheme $\Pi$ should be secure.
As in \cite{GGP10}, we say that the scheme $\Pi$  is secure if no untrusted  server can cause
the client to accept an incorrect computation result with a forged proof.
This intuition can be formalized by an experiment
${\sf Exp}_{\cal A}^{\sf Ver}(\Pi, f,\lambda)$  (see {Fig.} \ref{figure:experiment})
where the challenger  plays the role of
the client and the adversary ${\cal A}$  plays the role of the untrusted  server.
\begin{definition}
 The  scheme $\Pi$ is said to be
{\em ${\cal F}$-secure} if for any $f\in {\cal F}$ and any PPT adversary
${\cal A}$, it holds that
$\Pr[{\sf Exp}_{\cal A}^{\sf Ver}(\Pi, f, \lambda)=1]<{\sf neg}(\lambda).$
\end{definition}

\vspace{1mm}
\noindent
{\bf Privacy.}
The client's input should be hidden from the server in $\Pi$.
 As in \cite{GGP10}, we define  input privacy  based on
the intuition that no untrusted  server can distinguish between different inputs of the client.
This is formalized by an experiment ${\sf Exp}_{\cal A}^{\sf Pri}(\Pi, f,\lambda)$
(see {Fig.} \ref{figure:experiment}) where the challenger  plays the role
of the client and the adversary $\cal A$  plays the role of the untrusted server.
\begin{definition}
\label{definition:privacy}
The  scheme $\Pi$ is said to achieve {\em  input privacy}
 if for any function $f\in {\cal F}$, any PPT algorithm $\cal A$,
it holds that $\Pr[{\sf Exp}_{\cal A}^{\sf Pri}(\Pi, f,\lambda)=1]<{\sf neg}(\lambda).$
\end{definition}

\vspace{1mm}
\noindent
{\bf Efficiency.}
The algorithms ${\sf ProbGen}$ and
  ${\sf Verify}$ will be run by the client for each evaluation of the outsourced function
  $f$. Their running time should be substantially  less than
  evaluating $f$.
\begin{definition}
The  scheme $\Pi$ is said to be {\em outsourced}
if for any $f\in {\cal F}$ and any input $x$ to $f$,
the  running time of  ${\sf ProbGen}$ and ${\sf Verify}$ is $o(t)$, where
$t$ is the time required to compute $f(x)$.
\end{definition}

\section{Our Schemes}
\label{section:schemes}

\subsection{Univariate Polynomial Evaluation}
\label{section:PE}

In this section, we present our VC scheme $\Pi_{\rm pe}$ with input privacy
 (see {Fig.} \ref{figure:vc-pe}) for univariate polynomial evaluation.
In $\Pi_{\rm pe}$, the client outsources  a degree $n$ polynomial
 $f(x)=f_0+f_1x+\cdots+f_nx^n\in \mathbb{Z}_q[x]$  to the server and may evaluate
 $f(\alpha)$ for
 any input $\alpha\in \mathbb{D}\subseteq \mathbb{Z}_q$, where $q$ is a
$\lambda$-bit prime not known to the server and $|\mathbb{D}|={\sf poly}(\lambda)$.
 Our scheme uses a $(2k+1)$-multilinear map instance $\Gamma$ with groups of order $N=pq$,
 where
 $k=\lceil \log (n+1)\rceil$ and $p$ is also a $\lambda$-bit prime not known to the server.
The client stores $t=g_1^{f(s)}$ and  gives $(g_1^s,g_1^{s^2}\ldots, g_1^{s^{2^{k-1}}},f)$  to the server, where $s\leftarrow \mathbb{Z}_N$.
It also sets up ${\rm BGN}_{2k+1}$ based on $\Gamma$.
 In order to verifiably compute
$f(\alpha)$, the client gives $\sigma=(\sigma_1,\ldots, \sigma_k)$ to the server and the server returns $\rho={\sf Enc}(f(\alpha))$ along with
$\pi={\sf Enc}(c(s))$, where $\sigma_\ell={\sf Enc}(\alpha^{2^{\ell-1}})$
for every $\ell\in [k]$ and  $(\rho,\pi)$ is computed using the techniques in Section \ref{section:BGN}.
At last, the client  decrypts $\rho$ to $y$ and verifies if the equation
(\ref{equation:check}) holds.

%The client computes  and stores $t=g_1^{f(s)}$
%for future  verifications. He also generates a public key $pk$ which will
%be used to compute proofs.
%To privately evaluate $f(\alpha)$, the client gives the ciphertexts $\sigma_1,\sigma_2,
%\ldots, \sigma_k$ of
%$\alpha, \alpha^2,\ldots, \alpha^{2^{k-1}}$ under ${\rm BGN}_{2k+1}$ to the server.
%The server computes and returns both a ciphertext $\rho$ of $f(\alpha)$
%and a proof $\pi$  based on the techniques  in Section \ref{section:BGN}.
 \begin{figure}[h]
\begin{center}
\begin{boxedminipage}{17cm}
\begin{itemize}
\item  $\mathsf{KeyGen}(1^\lambda, f(x))$:
Pick
 $\Gamma=(N, G_1,\ldots,  G_{2k+1},   e,  g_1, \ldots,  g_{2k+1})\leftarrow {\cal G}(1^\lambda, 2k+1)$.
 Pick  $ s\leftarrow \mathbb{Z}_N$ and compute
 $t=g_1^{f(s)}$.
 Pick
$u\leftarrow G_1$ and compute $h=u^q$, where $u=g_1^{\delta}$ for  an integer
$\delta\in \mathbb{Z}_N$. Set up ${\rm BGN}_{2k+1}$  with public key
$(\Gamma,g_1, h)$ and secret key $p$.
Output $sk=(p,q, s, t)$ and
 $pk=(\Gamma, g_1,h; g_1^s, g_1^{s^2},  \ldots, g_1^{s^{2^{k-1}}}; f)$.
\item ${\sf ProbGen}(sk,\alpha)$:   For every $\ell\in [k]$, pick
 $r_\ell \leftarrow \mathbb{Z}_N$ and compute
 $\sigma_\ell=g_1^{\alpha^{2^{\ell-1}}}h^{r_\ell}$. Output
 $\sigma=(\sigma_1,\ldots, \sigma_k)$ and  $\tau=\perp$
  ($\tau$ is  not used).
\item ${\sf Compute}(pk,\sigma)$: Compute
$\rho_i=g_{k}^{\mu_i}$ for  every   $i\in \{0,1,\ldots,n\}$ using the technique in Section
\ref{section:BGN}. Compute  $\rho=\prod_{i=0}^n \rho_i^{f_i}$.
Compute $\pi_{ij}=g_{2k}^{\nu_{ij}}$ for every $i\in\{0,1,\ldots,n-1\}$
and $j\in\{0,1,\ldots,i\}$ using the technique in Section \ref{section:BGN}.
Compute   $\pi=\prod_{i=0}^{n-1}\prod_{j=0}^i \pi_{ij}^{f_{i+1}}$.
Output  $\rho$ and  $\pi$.
%\item ${\sf Compute}(pk,\sigma)$: Compute $\rho={\sf Enc}(f(\alpha))$
%and $\pi={\sf Enc}(c(s))$ using the techniques in
%Section \ref{section:BGN}.
%Output  $\rho$ and  $\pi$.
\item ${\sf Verify}(sk, \tau, \rho,\pi)$: Compute  the $y\in \mathbb{Z}_q$ such that
$\rho^p=(g_{k}^p)^y$.
If
\begin{equation}
\label{equation:check}
e\big(t/g_1^{y}  , g_{2k}^{p}\big)=
e\big(g_1^s/g_1^\alpha, \pi^{p}\big),
\end{equation}
 then output $y$; otherwise,
output $\perp$.

\end{itemize}
\end{boxedminipage}
\caption{Univariate polynomial evaluation ($\Pi_{\rm pe}$)}
\label{figure:vc-pe}
\end{center}
\end{figure}

\vspace{1mm}
\noindent
{\bf Correctness.}
The correctness of $\Pi_{\rm pe}$ requires that the client always
outputs $f(\alpha)$ as long as the server is honest, i.e.,
$y=f(\alpha)$ and (\ref{equation:check}) holds.
It is shown by the following lemma.
\begin{lemma}
\label{theorem:pe-correctness}
If the server is honest, then
$y=f(\alpha)$ and
  (\ref{equation:check}) holds.
\end{lemma}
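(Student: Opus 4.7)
The plan is to verify the two claims separately, both of which reduce to the ciphertext structure already worked out in Section \ref{section:BGN}. From that analysis, an honest execution produces $\rho = g_k^{f(\alpha)} h_k^{r}$ at level $k$ and $\pi = g_{2k}^{c(s)} h_{2k}^{r'}$ at level $2k$, where $c(x) = (f(x) - f(\alpha))/(x - \alpha)$, the auxiliary generators $h_k = g_k^{q\delta}$ and $h_{2k} = g_{2k}^{q\delta}$ lie in the order-$p$ subgroups, and $r, r' \in \mathbb{Z}_N$ are determined by the client's encryption randomness.

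For the decryption claim $y = f(\alpha)$, I would raise $\rho$ to the $p$-th power. Since $pq \equiv 0 \pmod{N}$, the $h_k^{rp}$ component vanishes, leaving $\rho^p = (g_k^p)^{f(\alpha)}$. Because $f(\alpha) \in \mathbb{Z}_q$ and $g_k^p$ has order $q$, the unique $y \in \mathbb{Z}_q$ satisfying $\rho^p = (g_k^p)^y$ is exactly $y = f(\alpha)$, which is what the decryption routine recovers.

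For the verification equation (\ref{equation:check}), the same $p$-th power trick first gives $\pi^p = (g_{2k}^p)^{c(s)}$. Then, using bilinearity of the multilinear map, the left-hand side evaluates to $g_{2k+1}^{p(f(s) - f(\alpha))}$ (since $t/g_1^y = g_1^{f(s)-f(\alpha)}$), while the right-hand side evaluates to $g_{2k+1}^{p(s-\alpha)c(s)}$. The two exponents coincide by the polynomial identity $f(s) - f(\alpha) = (s-\alpha)c(s)$, which is precisely the algebraic fact underlying the basic polynomial-commitment construction of \cite{KZG10} on which $\Pi_{\rm pe}$ is modelled.

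The only real obstacle is bookkeeping rather than mathematics: one must verify at each multilinear level $k$ and $2k$ that the auxiliary generator indeed lies in the order-$p$ subgroup, so that raising to $p$ kills the randomness contributed by the encryption, and that the discrete logarithm in the decryption step is interpreted with respect to the base $g_k^p$ of order $q$. Both facts fall out immediately from the generalized BGN construction recalled in Section \ref{section:BGN}, so no new cryptographic assumptions are invoked in the correctness argument.
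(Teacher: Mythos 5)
Your proposal is correct and follows essentially the same route as the paper's own proof: both raise $\rho$ and $\pi$ to the $p$-th power to annihilate the blinding components in the order-$p$ subgroup (the paper phrases this as $p\mu_i \equiv p\alpha^i \bmod N$ and $p\nu_{ij} \equiv p\alpha^j s^{i-j}\bmod N$, whereas you work directly with the ciphertext form $g_k^m h_k^r$), then read off $y=f(\alpha)$ from the order-$q$ base $g_k^p$ and match the two sides of (\ref{equation:check}) via $f(s)-f(\alpha)=(s-\alpha)c(s)$. The only cosmetic difference is the level of abstraction; the underlying calculation is identical.
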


\begin{proof}
Firstly,   $p\mu_i\equiv p\alpha^i\bmod N$ for every
$i\in \{0,1,\ldots,n\}$.
Since $g_{k}$ is of
order $N$,
we have that
$$\rho^p=\prod_{i=0}^n \rho_i^{pf_i}=\prod_{i=0}^n g_{k}^{p\mu_if_i}
=\prod_{i=0}^n g_{k}^{pf_i\alpha^i}
=\big(g_{k}^p\big)^{f(\alpha)},
$$
which implies that $y=f(\alpha)$.
 Secondly, $p\nu_{ij}=p\alpha^j s^{i-j}\bmod N$ for every $i\in \{0,1,\ldots, n-1\}$
 and $j\in\{0,1,\ldots,i\}$.
Thus,
$$\pi^p=\prod_{i=0}^{n-1}\prod_{j=0}^i \pi_{ij}^{p f_{i+1}}=
\prod_{i=0}^{n-1}\prod_{j=0}^i  g_{2k}^{p \nu_{ij} f_{i+1}}=
\prod_{i=0}^{n-1}\prod_{j=0}^i g_{2k}^{pf_{i+1}\alpha^j s^{i-j}}=g_{2k}^{c(s)}. $$
It follows that
$e\big(g_1^s/g_1^\alpha, \pi^{p}\big)=g_{2k+1}^{(s-\alpha)c(s)}=g_{2k+1}^{f(s)-f(\alpha)}=
e\big(t/{g_1^y},
 g_{2k}^{p}\big),$ i.e, the equality (\ref{equation:check}) holds.
\end{proof}

\newpage

\noindent
{\bf Security.}
The security  of $\Pi_{\rm pe}$ requires that no untrusted server can cause
 the client to accept a value $\bar{y}\neq f(\alpha)$ with a forged proof.
 It is based on  the $(2k+1, n)$-MSDHS  assumption (see Definition \ref{definition:msdh}).
\begin{lemma}
\label{theorem:security-pe}
If
$(2k+1, n)$-{MSDHS} holds for $\Gamma$, then the scheme
$\Pi_{\rm pe}$ is secure.
\end{lemma}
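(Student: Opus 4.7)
The plan is to reduce the security of $\Pi_{\rm pe}$ to the $(2k+1,n)$-MSDHS assumption. Suppose, toward a contradiction, that a PPT adversary $\mathcal A$ wins ${\sf Exp}_{\mathcal A}^{\sf Ver}(\Pi_{\rm pe},f,\lambda)$ with non-negligible probability $\epsilon$. The goal is to build a PPT algorithm $\mathcal B$ that, on input an MSDHS challenge $(p,q,\Gamma,g_1,g_1^s,\ldots,g_1^{s^n})$ with $\Gamma$ a $(2k+1)$-multilinear map instance, outputs $g_{2k+1}^{p/s}$ with probability at least $\epsilon/|\mathbb D|$, which is non-negligible because $|\mathbb D|={\sf poly}(\lambda)$.

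The reduction rests on a ``guess-and-shift'' trick. $\mathcal B$ first samples $\hat\alpha^{\star}\leftarrow \mathbb D$, betting on the input $\mathcal A$ will ultimately target. It then simulates $\mathsf{KeyGen}$ using the shifted variable $s'=s+\hat\alpha^{\star}$, which is uniform over $\mathbb Z_N$. Every public-key component, namely the $g_1^{(s')^{2^i}}$ for $i<k$ and $t=g_1^{f(s')}$, is computable from $g_1,g_1^s,\ldots,g_1^{s^n}$ by binomial expansion (this fits because $2^{k-1}\le n$). The BGN portion of $pk$ is set up honestly: pick $\delta\leftarrow \mathbb Z_N$ and set $h=g_1^{q\delta}$, which $\mathcal B$ can do since $q$ is part of the MSDHS challenge. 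The adaptive $\mathsf{ProbGen}$ queries of $\mathcal A$ require only BGN encryptions of powers of the queried inputs and are answered honestly without touching $s$. Consequently the joint distribution of $pk$ and all returned $\sigma_i$'s matches the real experiment exactly.

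When $\mathcal A$ outputs its forgery $(\hat\alpha,\bar\rho,\bar\pi)$, $\mathcal B$ aborts unless $\hat\alpha=\hat\alpha^{\star}$ and the forgery wins, i.e.\ the decryption $\bar y$ of $\bar\rho$ satisfies $\bar y\neq f(\hat\alpha^{\star})\bmod q$ and equation~(\ref{equation:check}) holds. To extract the MSDHS solution, write $\bar\pi^p=g_{2k}^{p\beta}$ for the unique $\beta\bmod q$ (the $p$-th power always lands in the order-$q$ subgroup of $G_{2k}$); then~(\ref{equation:check}), after raising both sides to the $p$-th power and working in the order-$q$ subgroup of $G_{2k+1}$, forces $f(s')-\bar y\equiv(s'-\hat\alpha^{\star})\beta\equiv s\beta\pmod q$. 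Expanding $f(s')=f(\hat\alpha^{\star}+s)=f(\hat\alpha^{\star})+s\,G(s)$ for an explicitly computable polynomial $G$ of degree at most $n-1$, I rewrite this as $s(\beta-G(s))\equiv f(\hat\alpha^{\star})-\bar y\pmod q$. In the cheating branch $f(\hat\alpha^{\star})-\bar y$ is invertible modulo $q$, and $s$ is invertible modulo $q$ with overwhelming probability, so $\mathcal B$ can compute $g_{2k}^{pG(s)}$ via $(2k)$-fold multilinear pairings of the challenge elements $g_1^{s^i}$ with copies of $g_1$, form $\bar\pi^p\cdot g_{2k}^{-pG(s)}=g_{2k}^{p(f(\hat\alpha^{\star})-\bar y)/s}$, raise to $(f(\hat\alpha^{\star})-\bar y)^{-1}\bmod q$ to obtain $g_{2k}^{p/s}$, and finally pair with $g_1$ to lift this to $g_{2k+1}^{p/s}$.

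The main obstacle is the level and variable mismatch between the MSDHS target ($g_{2k+1}^{p/s}$ at level $2k+1$, with the challenger's $s$) and the element that cheating naturally produces ($g_{2k}^{p/(s'-\hat\alpha)}$ at level $2k$, with $\hat\alpha$ chosen by $\mathcal A$). The shift $s'=s+\hat\alpha^{\star}$ together with the poly-size guess of $\hat\alpha^{\star}$ aligns the denominators at the unavoidable cost of a $1/|\mathbb D|$ factor in the reduction, while one extra multilinear pairing with $g_1$ climbs from level $2k$ to level $2k+1$. A minor technical point is that $\bar\pi$ need not be a valid BGN ciphertext; this is handled by the observation that $\bar\pi^p$ automatically lies in the order-$q$ subgroup of $G_{2k}$, making $\beta$ well defined and the extraction valid.
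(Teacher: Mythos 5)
Your proposal is correct and uses the same overall reduction architecture as the paper: a guess-and-shift simulator that samples $\hat\alpha^\star\leftarrow\mathbb D$, implicitly sets the evaluation point to $s'=s+\hat\alpha^\star$ so that the denominator of the MSDHS target aligns with $s$, answers $\mathsf{ProbGen}$ queries honestly, and loses a factor $1/|\mathbb D|$ by aborting when the adversary's challenge input misses the guess. The distribution argument and the loss bound match the paper's.

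Where you genuinely differ is the final extraction step. The paper has the simulator additionally compute an \emph{honest} response $(\hat\rho,\hat\pi)$ for the challenge input, observes that both $(\bar y,\bar\pi)$ and $(\hat y,\hat\pi)$ satisfy equation~(\ref{equation:check}), and divides the two instances of the equation to isolate $e\bigl(g_1,(\hat\pi/\bar\pi)^p\bigr)^{1/(\bar y-\hat y)}=g_{2k+1}^{p/s}$. You instead work with the single forged proof alone: from $\bar\pi^p=g_{2k}^{p\beta}$ and the verification equation (raised to the $p$-th power) you get $s(\beta-G(s))\equiv f(\hat\alpha^\star)-\bar y\pmod q$ with $G$ the cofactor polynomial in $f(\hat\alpha^\star+x)-f(\hat\alpha^\star)=xG(x)$, then compute $g_{2k}^{pG(s)}$ from the challenge elements, cancel it, and invert $f(\hat\alpha^\star)-\bar y$ modulo $q$. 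Both extractions are valid; the paper's is shorter to state but requires the simulator to re-run the honest server, while yours avoids that by exploiting the explicit algebraic decomposition of $f(s')-f(\hat\alpha^\star)$. Your observation that $\bar\pi^p$ lands in the order-$q$ subgroup even for malformed $\bar\pi$, and that exponents can safely be read modulo $q$ there, is the right way to handle the composite order and implicitly appears in the paper's argument as well.
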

\begin{proof}
Suppose that $\Pi_{\rm pe}$ is not secure. Then there is
a PPT adversary  ${\cal A}$ that breaks its
security with non-negligible probability $\epsilon_1$. We shall construct a  PPT simulator ${\cal B}$ that
simulates ${\cal A}$ and breaks the $(2k+1,n)$-MSDHS for $\Gamma$. The simulator
${\cal B}$ takes as input
$(p,q,\Gamma,g_1, g_1^s, \ldots, g_1^{s^{n}})$, where  $s\leftarrow \mathbb{Z}_N$.
The simulator ${\cal B}$ is required to output  $g_{2k+1}^{p/s}$. In order to do so, ${\cal B}$ simulates
${\cal A}$ as  below:
\begin{itemize}
\item[(A)] Pick  a polynomial $f(x)=f_0+f_1x+\cdots+f_n x^n\in \mathbb{Z}_q[x]$.
Pick $u\leftarrow G_1$, compute $h=u^q$ and set up ${\rm BGN}_{2k+1}$
with public key $(\Gamma,g_1,h)$ and secret key $p$. Pick $\beta\leftarrow \mathbb{D}$ and implicitly set $\hat{s}=s+\beta$ ($\hat{s}$ is not known to $\cal B$).
Mimic
${\sf KeyGen}$ by sending   $pk=(\Gamma, g_1,h,  g_1^{\hat{s}}, g_1^{\hat{s}^2}\ldots,  g_1^{\hat{s}^{2^{k-1}}},f)$
 to ${\cal A}$ (note that $\cal B$ can compute $g_1^{\hat{s}^{2^{\ell-1}}}$ for every $\ell\in [k]$ based on the knowledge of $\beta$ and $g_1,g_1^s,\ldots, g_1^{s^n}$). Set $sk=(p,q,t)$, where $t=g_1^{f(\hat{s})}$ (note that $sk$ does not
 include $\hat{s}$ as a component because $\hat{s}$ is neither known to $\cal B$ nor used by $\cal B$);
\item[(B)]  Upon receiving  $\alpha\in \mathbb{D}$ from ${\cal A}$, mimic ${\sf ProbGen}$
as below:
 pick $r_\ell\leftarrow \mathbb{Z}_N$ and compute
$\sigma_\ell=g_1^{\alpha^{2^{\ell-1}}}h^{r_\ell}$ for every $\ell\in [k]$;
send $\sigma=(\sigma_1,\ldots, \sigma_k)$ to  ${\cal A}$.
\end{itemize}
It is trivial to verify that the $pk$ and $\sigma$ generated by ${\cal B}$
 are identically distributed to those generated by the client in an execution
of  $\Pi_{\rm pe}$.
We remark that (A) is the step 1 in  ${\sf Exp}_{\cal A}^{\sf Ver}(\Pi, f,\lambda)$
(see {Fig.} \ref{figure:experiment}) and
(B) consists of steps 3 and 4 in  ${\sf Exp}_{\cal A}^{\sf Ver}(\Pi, f,\lambda)$.
Furthermore, (B) may be run $l={\sf poly}(\lambda)$  times as
described by step 2 of  ${\sf Exp}_{\cal A}^{\sf Ver}(\Pi, f,\lambda)$.
After $l$  executions of (B), the adversary ${\cal A}$
will provide an input $\hat{\alpha}$ on which he is willing
to be challenged.
If $\hat{\alpha}\neq \beta$, then the simulator $\cal B$ aborts; otherwise, it continues.
Note that both $\beta$ and $\hat{\alpha}$ are from the same polynomial size
domain $\mathbb{D}$, the event that $\hat{\alpha}=\beta$ will occur with probability $\epsilon_2\geq 1/|\mathbb{D}|$, which is non-negligible.
If the simulator ${\cal B}$ does not abort, it next
runs $(\hat{\sigma}, \hat{\tau})\leftarrow {\sf ProbGen}(sk,\hat{\alpha})$  and
gives ${\cal A}$  an encoded input $\hat{\sigma}$. Then  the adversary
${\cal A}$ may  maliciously reply with $(\bar{\rho}, \bar{\pi})$
such that
${\sf Verify}(sk,\hat{\tau}, \bar{\rho}, \bar{\pi})\triangleq \bar{y}\notin \{f(\hat{\alpha}),\perp\}$.
On the other hand,  an honest
server in  $\Pi_{\rm pe}$ will reply with $(\hat{\rho},\hat{\pi})$. Due to Theorem  \ref{theorem:pe-correctness},
it must be the case that
${\sf Verify}(sk, \hat{\tau}, \hat{\rho},\hat{\pi})\triangleq \hat{y}=f(\hat{\alpha})$. Note that the event that
$\bar{y}\notin \{f(\hat{\alpha}), \perp\}$ occurs with probability $\epsilon_1$.
Suppose the event $\bar{y}\notin \{f(\hat{\alpha}), \perp\}$ occurs, then  the equation
(\ref{equation:check}) is satisfied by both $(\bar{y}, \bar{\pi})$
and $(\hat{y},\hat{\pi})$, i.e.,
\begin{equation}
\label{equation:compare-pe}
\begin{split}
e\big(t/g_1^{\bar{y}}, g_{2k}^{p}\big)=
e\big(g_1^{\hat{s}}/g_1^{\hat{\alpha}}, \bar{\pi}^{p}\big)
{\hspace{1mm}}{\rm and} {\hspace {1mm}}
e\big(t/g_1^{\hat{y}}, g_{2k}^{p}\big)=
e\big(g_1^{\hat{s}}/g_1^{\hat{\alpha}}, \hat{\pi}^{p}\big).
\end{split}
\end{equation}
The equalities in (\ref{equation:compare-pe}) imply  that
$\displaystyle e\big(g_1^{\bar{y}-\hat{y}}, g_{2k}^{p}\big)=e\big(g_1^{\hat{s}-
\hat{\alpha}},
\big(\hat{\pi}/\bar{\pi}\big)^{p}\big).$
Hence,
\begin{equation}
\label{equation:break}
g_{2k+1}^{\frac{p}{\hat{s}-\hat{\alpha}}}=e\big(g_1,
\big(\hat{\pi}/\bar{\pi}\big)^{p}\big)^{\frac{1}{\bar{y}-\hat{y}}}.
\end{equation}
Note that the left hand side of (\ref{equation:break}) is $g_{2k+1}^{p/s}$ due to $\beta=\hat{\alpha}$.
Therefore, (\ref{equation:break}) means that
the simulator ${\cal B}$ can break the $(2k+1,n)$-MSDHS assumption (Definition \ref{definition:msdhs}) with probability $\epsilon=\epsilon_1\epsilon_2$, which is non-negligible and  contradicts to the $(2k+1,n)$-MSDHS assumption.
Hence, under the $(2k+1,n)$-MSDHS assumption,
$\epsilon_1$ must be negligible in $\lambda$, i.e., the scheme $\Pi_{\rm pe}$ is secure.
\end{proof}

\noindent
{\bf Privacy.}
The input privacy of  $\Pi_{\rm pe}$ requires
that no untrusted server
can distinguish between  different inputs  of the client. This is formally defined by
the experiment ${\sf Exp}_{\cal A}^{\sf Pri}(\Pi, f,\lambda)$ in
 {Fig.} \ref{figure:experiment}. The client in our VC scheme encrypts its input
 $\alpha$ using ${\rm BGN}_{2k+1}$ which is semantically secure under
 SDA for $\Gamma$. As a result, our VC scheme achieves input privacy under SDA for
 $\Gamma$ (see Appendix \ref{appendix:privacy-pe} for a proof of the following lemma).
\begin{lemma}
\label{theorem:privacy-PE}
If {SDA} holds for $\Gamma$, then the scheme
$\Pi_{\rm pe}$ achieves the input privacy.
\end{lemma}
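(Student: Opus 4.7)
The plan is to reduce the input privacy of $\Pi_{\rm pe}$ to the semantic security of ${\rm BGN}_{2k+1}$, which in turn reduces to the SDA. The key observation is that every coordinate $\sigma_\ell=g_1^{\alpha^{2^{\ell-1}}}h^{r_\ell}$ of the encoded input is exactly a ${\rm BGN}_{2k+1}$ ciphertext of $\alpha^{2^{\ell-1}}$ under public key $(\Gamma,g_1,h)$, and the responses produced by the ${\sf PubProbGen}$ oracle have precisely the same form, so the entire view of $\cal A$ in ${\sf Exp}_{\cal A}^{\sf Pri}(\Pi_{\rm pe},f,\lambda)$ is a vector of ${\rm BGN}_{2k+1}$ ciphertexts together with the public key.

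First, I will introduce $k+1$ hybrid experiments $H_0,H_1,\ldots,H_k$, where in $H_\ell$ the first $\ell$ coordinates of the challenge $\sigma$ encrypt powers of $\alpha_1$ and the remaining $k-\ell$ coordinates encrypt the corresponding powers of $\alpha_0$. Then $H_0$ equals the privacy experiment conditioned on $b=0$ and $H_k$ equals the same experiment conditioned on $b=1$. Any PPT distinguisher between $H_\ell$ and $H_{\ell+1}$ is turned into a semantic-security distinguisher for ${\rm BGN}_{2k+1}$: given a challenge ciphertext encrypting either $\alpha_0^{2^\ell}$ or $\alpha_1^{2^\ell}$, the simulator embeds it at coordinate $\ell+1$ of $\sigma$ and generates every other coordinate of $\sigma$, together with every ${\sf PubProbGen}$ answer, honestly using the public $h$ it receives from its own challenger.

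Second, the semantic security of ${\rm BGN}_{2k+1}$ itself reduces to ${\rm SDA}_1$ in the standard Boneh--Goh--Nissim style: replacing $h=u^q$ (of order $p$) by a uniform element $h'\leftarrow G_1$ is indistinguishable by ${\rm SDA}_1$, and with overwhelming probability such an $h'$ generates $G_1$, so $h'^r$ for $r\leftarrow \mathbb{Z}_N$ is uniform in $G_1$ and $g_1^mh'^r$ hides $m$ statistically. By Lemma \ref{lemma:sda}, ${\rm SDA}_1$ is implied by SDA, closing the chain.

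The main obstacle is only bookkeeping inside the reduction: the simulator ${\cal B}$ receiving an SDA instance must reconstruct the entire public key $(\Gamma,g_1,h,g_1^s,\ldots,g_1^{s^{2^{k-1}}},f)$ and answer every ${\sf PubProbGen}$ query without access to the secret key $(p,q,s,t)$. This is not a real obstacle, because $\cal B$, playing the role of the client in the privacy experiment, is free to sample $s\leftarrow \mathbb{Z}_N$ itself and therefore can compute every power $g_1^{s^{2^{\ell-1}}}$ as well as $t=g_1^{f(s)}$; the ${\sf PubProbGen}$ oracle invokes only ${\sf Enc}$, which depends solely on $pk$. Summing the $k$ hybrid gaps with the single ${\rm SDA}_1$ gap via a union bound then yields the desired negligible distinguishing advantage and completes the proof.
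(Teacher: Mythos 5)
Your proposal is correct and takes essentially the same route as the paper: both reduce input privacy to the semantic security of ${\rm BGN}_{2k+1}$ via a $k$-step hybrid over the $k$ coordinates of $\sigma$ (replacing encryptions of powers of $\alpha_0$ by powers of $\alpha_1$ one coordinate at a time, losing a factor of $k$ in advantage), and both note that the simulator can sample $s$ itself and answer ${\sf PubProbGen}$ queries using only the BGN public key. The only cosmetic difference is that the paper packages the $k$ hybrids into a single reduction by guessing a random index $i\leftarrow\{0,\ldots,k-1\}$, whereas you describe the per-hybrid distinguishers and sum the gaps; you also spell out the final BGN-to-${\rm SDA}_1$ step, which the paper merely cites.
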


\noindent
{\bf Efficiency.}
In order to verifiably compute $f(\alpha)$ with the cloud, the client
computes $k=\lceil \log (n+1) \rceil$ ciphertexts $\sigma_1,\ldots, \sigma_k$
under ${\rm BGN}_{2k+1}$
in the execution of $\sf ProbGen$; it also  decrypts one ciphertext
$\rho={\sf Enc}(f(\alpha))$
under ${\rm BGN}_{2k+1}$ and then verifies the equation (\ref{equation:check}).
%Note that the successful decryption of $\rho$ requires that $f(\alpha)$ belongs to
%a polynomial size domain $\mathbb{M}$, which is a limitation of ${\rm BGN}_2$ and inherited by
% ${\rm BGN}_{2k+1}$. As long as $f(\alpha)\in \mathbb{M}$,
 The overall computation of the client
 will be $O(\log n)=o(n)$ and therefore $\Pi_{\rm pe}$ is outsourced.
% In Section \ref{section:discussion} and \ref{section:applications}, we shall see that
%the above limitation  does not affect the application of $\Pi_{\rm pe}$ in  outsourcing PIR.
On the other hand, the server needs to perform $O(n^2\log n)$
multilinear map computations and $O(n^2)$ exponentiations  in each execution of
$\sf Compute$, which is comparable with the VC schemes based on FHE.
Based on Lemmas \ref{theorem:pe-correctness},  \ref{theorem:security-pe},
 \ref{theorem:privacy-PE} and the  efficiency analysis,  we have the following theorem.
\begin{theorem}
If the $(2k+1, n)$-{MSDHS} and SDA assumptions for $\Gamma$ both hold, then
 $\Pi_{\rm pe}$ is a VC scheme with input privacy.
\end{theorem}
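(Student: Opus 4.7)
The final theorem is essentially a packaging statement: it asserts that $\Pi_{\rm pe}$ simultaneously satisfies the four properties that make up ``VC scheme with input privacy'' from Section \ref{section:vc}, namely $\mathcal{F}$-correctness, $\mathcal{F}$-security, input privacy, and being outsourced. My plan is simply to assemble the previously established ingredients rather than do new work.

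First I would note that for the family $\mathcal{F}$ of degree-$n$ polynomials over $\mathbb{Z}_q[x]$, Lemma \ref{theorem:pe-correctness} already gives $\mathcal{F}$-correctness: for any honestly generated $(pk,sk)$, any $\alpha\in\mathbb{D}$, and any honestly produced $(\sigma,\tau)$ and $(\rho,\pi)$, the decrypted value $y$ equals $f(\alpha)$ and the verification equation (\ref{equation:check}) holds, so ${\sf Verify}(sk,\tau,\rho,\pi)=f(\alpha)$. Second, invoking Lemma \ref{theorem:security-pe} under the $(2k+1,n)$-MSDHS assumption gives $\mathcal{F}$-security: no PPT $\mathcal{A}$ can win ${\sf Exp}_{\mathcal{A}}^{\sf Ver}(\Pi_{\rm pe},f,\lambda)$ with non-negligible probability. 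Third, invoking Lemma \ref{theorem:privacy-PE} under SDA for $\Gamma$ gives input privacy: no PPT $\mathcal{A}$ wins ${\sf Exp}_{\mathcal{A}}^{\sf Pri}(\Pi_{\rm pe},f,\lambda)$ non-negligibly, since an adversary that distinguishes the challenge encoded inputs $\sigma$ for $\alpha_0,\alpha_1$ would give a distinguisher for ${\rm BGN}_{2k+1}$ ciphertexts, which in turn breaks SDA.

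Finally, I would appeal to the efficiency analysis just given: ${\sf ProbGen}$ produces $k=\lceil\log(n+1)\rceil$ ${\rm BGN}_{2k+1}$ ciphertexts, and ${\sf Verify}$ requires one decryption of $\rho$ (a discrete logarithm search over the polynomial-size range of $f$) plus a constant number of multilinear-map and exponentiation operations to check (\ref{equation:check}). The combined client work is $O(\log n)$ group operations plus a $\mathrm{poly}(\lambda)$-time discrete log, which is $o(n)$ for large $n$ and hence $o(t)$ where $t$ is the cost of evaluating $f$ directly. This is exactly the ``outsourced'' condition of Section \ref{section:vc}.

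I do not expect any genuine obstacle, since all four conditions in the definition of a privacy-preserving VC scheme have been separately established. The only minor care needed is to make sure the hypotheses line up cleanly: both the security and the privacy reductions are stated for the same multilinear instance $\Gamma$ chosen by ${\sf KeyGen}$, and the semantic security of ${\rm BGN}_{2k+1}$ used in the privacy argument was shown to follow from the SDA for $\Gamma$ in Section \ref{section:BGN}, so combining the two cryptographic hypotheses in the theorem introduces no inconsistency. The conclusion then follows by simply citing Lemmas \ref{theorem:pe-correctness}, \ref{theorem:security-pe}, and \ref{theorem:privacy-PE} together with the efficiency discussion.
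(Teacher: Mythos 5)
Your proposal is correct and matches the paper's own treatment: the theorem is stated immediately after the sentence ``Based on Lemmas \ref{theorem:pe-correctness}, \ref{theorem:security-pe}, \ref{theorem:privacy-PE} and the efficiency analysis, we have the following theorem,'' so the intended proof is precisely the aggregation you describe. You correctly identify all four required properties (correctness, security, input privacy, outsourced efficiency), cite the right lemma for each, and check that the two hypotheses line up with the assumptions used in Lemmas \ref{theorem:security-pe} and \ref{theorem:privacy-PE} respectively.
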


\vspace{1mm}
\noindent
{\bf A variant of $\Pi_{\rm pe}$ based on $(2k+1,n)$-{\rm MSDH}.}
The VC scheme $\Pi_{\rm pe}$ we constructed in this section has its security based on
the $(2k+1,n)$-MSDHS which is slightly stronger than the
 $(2k+1,n)$-MSDH.
The $(2k+1,n)$-MSDH  implies that
except for a negligible fraction of the $(2k+1)$-multilinear map instances,
at least one of the following problems  is difficult for any PPT algorithm:
(i) given $g_1,g_1^s,\ldots, g_1^{s^n}$, compute $g_1^{p/s}$;
(ii) given $g_1,g_1^s,\ldots, g_1^{s^n}$, compute $g_1^{q/s}$.
While it is not known how to determine which one of the two problems  is difficult
for a given $\Gamma_{2k+1}$ instance, the construction  of $\Pi_{\rm pe}$ simply assumes one of them
is always difficult. In fact, we can also construct a VC scheme with input privacy
for univariate polynomial evaluation based on the weaker assumption
$(2k+1,n)$-MSDH. A natural idea is generating multiple $(2k+1)$-multilinear map instances,
say $\Gamma_{2k+1,1}, \ldots, \Gamma_{2k+1,\lambda}$, where
$\Gamma_{2k+1,l}$ is defined over groups of order $N_l=p_lq_l$ for every $l\in [\lambda]$.
The client can simply run one instance of
$\Pi_{\rm pe}$ based on each one of the $\lambda$ multilinear map instances
$\Gamma_{2k+1,1}, \ldots, \Gamma_{2k+1,\lambda}$. In particular, in each execution,
the client simply  picks a random one from the two problems and believe that the chosen
problem  is difficult. It will base that execution on the hardness of the chosen problem.
The client will not accept the $\lambda$ computation results from
the $\lambda$ executions of $\Pi_{\rm pe}$ except all of them agree with each other and
passes their respective verifications. The client will output a wrong computation result
only when the $\lambda$  chosen problems are
all easy and the server turns out to be able to break all of them. However, this will happen
with probability at most $2^{-\lambda}$, which is negligible.
Therefore, this modified scheme will be secure merely based on
 $(2k+1,n)$-MSDH.  Surely, as the price of using a weaker assumption,
the resulting scheme incurs a $\lambda$ overhead  to the  computation and communication complexities
of every party in $\Pi_{\rm pe}$,
which however is acceptable in some scenarios because the
$\lambda$ is independent of $n$.

\subsection{Matrix Multiplication}
\label{section:MM}

In this section, we present our VC scheme $\Pi_{\rm mm}$ with input privacy
(see {Fig.} \ref{figure:vc-mm})  for matrix multiplication.
In $\Pi_{\rm mm}$, the client outsources an $n\times n$ matrix
 $M=(M_{ij})$ over $\mathbb{Z}_q$ to the server and
may  compute $Mx$ for an input vector  $x=(x_1,\ldots,x_n)\in \mathbb{D}\subseteq \mathbb{Z}_q^n$,
where $q$ is a $\lambda$-bit prime not known to the server and $|\mathbb{D}|={\sf poly}(\lambda)$.
Our scheme uses a  trilinear  map instance $\Gamma$ with groups of order $N=pq$,
where $p$ is also a $\lambda$-bit
 prime not known to the server.
 In  $\Pi_{\rm mm}$, the client gives both $M$ and its
blinded version $T=(T_{ij})$ to the server, where $T$ is computed using
the ${\sf PRF}_{\rm dlin}$.
It also sets up ${\rm BGN}_3$.
In order to verifiably compute $Mx$,
the client stores $\tau=(\tau_1,\ldots, \tau_n)$, where each
  $\tau_i$ is efficiently computed using the
 closed form efficiency property of ${\sf PRF}_{\rm dlin}$.
 It gives  $\sigma=({\sf Enc}(x_1),\ldots, {\sf Enc}(x_n))$
  to  the server
 and the server   returns    $\rho=(\rho_1,\ldots, \rho_n)={\sf Enc}(Mx)$  along with
$\pi=(\pi_1,\ldots, \pi_n)$.
At last, the client  decrypts $\rho_i$ to $y_i$
and verify if (\ref{equation:check-MM}) holds for every $i\in[n]$.

%
%The client computes a blinded version $T=(T_{ij})$ of $M$
%using the algebraic PRF with closed form efficiency in Section \ref{section:aprf} and gives
%it to the server.
%He also computes and stores  $\tau=(\tau_1,\tau_2,\ldots, \tau_n)$
% for future verifications.
% To privately compute
% $Mx$, the client also gives the ciphertexts $\sigma_1,\sigma_2,\ldots, \sigma_n$
% of $x_1,x_2,\ldots, x_n$ under ${\rm BGN}_{3}$ to the server. The server
% computes and returns  $\rho=(\rho_1,\rho_2,\ldots, \rho_n)={\sf Enc}(Mx)$
%   and a proof $\pi=(\pi_1,\pi_2,\ldots, \pi_n)$.
\begin{figure}[h]
\begin{center}
\begin{boxedminipage}{17cm}
\begin{itemize}
\item ${\sf KeyGen}(1^\lambda,M)$:
Pick a trilinear map instance $\Gamma=(N,G_1,G_2,G_3,e,g_1,g_2,g_3)\leftarrow {\cal G}(1^\lambda,3)$.
 Consider the ${\sf PRF}_{\rm dlin}$ in Section \ref{section:aprf}.
Run  ${\sf KG}(1^\lambda,n)$
and pick a  secret key $K$.
Pick  $a\leftarrow \mathbb{Z}_N$ and compute
 $T_{ij}=g_1^{p^2aM_{ij}}\cdot {\sf   F}_K(i,j)$ for every  $(i,j)\in[n]^2$.
 Pick  $u\leftarrow G_1$ and compute
$ h=u^q$.
Set up ${\rm BGN}_{3}$ with public key
$(\Gamma,g_1,h)$ and secret key $p$.
Output
 $sk=(p,q,K, a, \eta)$ and  $pk=(\Gamma, g_1,h, M,T)$, where $\eta=g_3^{p^2a}$.

\item ${\sf ProbGen}(sk,x)$:
For every $j\in[n]$, pick  $r_j\leftarrow \mathbb{Z}_N$
and compute  $\sigma_j=g_1^{x_j}h^{r_j}$.
For every $i\in[n]$, compute
 $\tau_i= e(\prod_{j=1}^n{\sf F}_K(i,j)^{x_j}, g_2^{p})$ using the efficient ${\sf CFE}$
 algorithm in Section \ref{section:aprf}.
Output   $\sigma=(\sigma_1,\ldots,\sigma_n)$ and
 $\tau=(\tau_1,\ldots, \tau_n)$.
\item ${\sf Compute}(pk,\sigma)$:
Compute
 $\rho_i=\prod_{j=1}^n \sigma_j^{M_{ij}}$ and
 $\pi_i=\prod_{j=1}^n e(T_{ij},\sigma_j)$ for every $i\in[n]$. Output
  $\rho=(\rho_1,\ldots,\rho_n)$ and
 $\pi=(\pi_1,\ldots, \pi_n)$.
\item ${\sf Verify}(sk,\tau, \rho,\pi)$:
For every $i\in[n]$, compute  $y_i$ such that $\rho_i^p=(g_1^p)^{y_i}$.
If
\begin{equation}
\label{equation:check-MM}
e(\pi_i, g_1^p)=\eta^{py_i}\cdot \tau_i
\end{equation}
 for every $i\in[n]$, then output $y=(y_1,\ldots,y_n)$;
otherwise output $\perp$.
\end{itemize}
\end{boxedminipage}
\caption{Matrix multiplication ($\Pi_{\rm mm}$)}
\label{figure:vc-mm}
\label{figure:vcmm}
\end{center}
\end{figure}

\vspace{1mm}
\noindent
{\bf Correctness.}
The correctness of $\Pi_{\rm mm}$ requires that the client  always outputs
$Mx$ as long as the server is honest, i.e., $y=Mx$
and  (\ref{equation:check-MM}) holds for every $i\in[n]$.
It is shown by the following lemma.
\begin{lemma}
\label{theorem:MM-correctness}
If the server is honest, then
$y=Mx$ and  (\ref{equation:check-MM}) holds
for every $i\in[n]$.
\end{lemma}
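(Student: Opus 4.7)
The plan is to verify both assertions by direct computation, exploiting that $h=u^q$ has order $p$ (so $h^p=1$ in $G_1$) and that $pq\equiv 0\pmod N$, which kills any cross terms involving $h$ when we pair up to the top group $G_3$.

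First I would check the output equality $y=Mx$. Writing $\sigma_j=g_1^{x_j}h^{r_j}$, I compute
\[
\rho_i^p=\prod_{j=1}^n \sigma_j^{pM_{ij}}=g_1^{p\sum_{j} M_{ij}x_j}\cdot h^{p\sum_j M_{ij}r_j}=\bigl(g_1^p\bigr)^{\sum_j M_{ij}x_j},
\]
since $h^p=u^{pq}=1$. Because $g_1^p$ has order $q$ and the entries $M_{ij},x_j\in\mathbb{Z}_q$, the decryption step recovers $y_i=\sum_j M_{ij}x_j=(Mx)_i$ uniquely in $\mathbb{Z}_q$, so $y=Mx$.

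Next I would verify the verification equation (\ref{equation:check-MM}). Write ${\sf F}_K(i,j)=g_1^{s_{ij}}$ (implicitly, using $A_j^{\alpha_i}B_j^{\beta_i}\in G_1$) and $h=g_1^{q\delta}$. Then $T_{ij}=g_1^{p^2aM_{ij}+s_{ij}}$ and $\sigma_j=g_1^{x_j+q\delta r_j}$. Expanding $\pi_i=\prod_j e(T_{ij},\sigma_j)$ and then pairing with $g_1^p$, trilinearity yields
\[
e(\pi_i,g_1^p)=g_3^{\,p\sum_{j}(p^2aM_{ij}+s_{ij})(x_j+q\delta r_j)}.
\]
Reducing the exponent modulo $N$, every term carrying a factor $pq$ vanishes, leaving only
\[
e(\pi_i,g_1^p)=g_3^{\,p^3a\sum_j M_{ij}x_j}\cdot g_3^{\,p\sum_j s_{ij}x_j}=\eta^{p\,y_i}\cdot g_3^{\,p\sum_j s_{ij}x_j},
\]
since $\eta=g_3^{p^2a}$ and $y_i=\sum_j M_{ij}x_j$. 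Finally, the closed form value satisfies
\[
\tau_i=e\!\left(\prod_j {\sf F}_K(i,j)^{x_j},\,g_2^p\right)=e\!\left(g_1^{\sum_j s_{ij}x_j},\,g_2^p\right)=g_3^{\,p\sum_j s_{ij}x_j},
\]
so $e(\pi_i,g_1^p)=\eta^{py_i}\cdot\tau_i$, as required.

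The main obstacle is simply organizing the bookkeeping of the trilinear map: one must keep track of which generator lies in which $G_i$, invoke $e:G_1\times G_2\to G_3$ (rather than $G_1\times G_1\to G_2$) when evaluating $\tau_i$, and ensure that all blinding contributions (both the BGN masks $h^{r_j}$ and the PRF masks ${\sf F}_K(i,j)$) either collapse via $h^p=1$ or are absorbed cleanly into $\tau_i$ via the closed-form identity. Once the exponents are collected and reduced mod $N=pq$, the equality is immediate.
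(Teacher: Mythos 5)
Your proof is correct and follows essentially the same direct-computation route as the paper: raise $\rho_i$ to the $p$-th power to kill the BGN mask, then expand $e(\pi_i,g_1^p)$ via trilinearity and observe that every cross term in the exponent carries a factor $pq\equiv 0\pmod N$. The only cosmetic difference is that you write everything out as a single exponent in $\mathbb{Z}_N$ and cancel $pq$-terms by hand, whereas the paper factors $e(e(T_{ij},\sigma_j),g_1^p)$ directly into $e(g_2^{p^2aM_{ij}x_j},g_1^p)\cdot e({\sf F}_K(i,j)^{x_j},g_2^p)$ and leaves the cancellation implicit; both amount to the same algebra.
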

\begin{proof}
Firstly,   $\sigma_j^p=g_1^{px_j}$ for every $j\in [n]$. Thus we have that
$$
\rho_i^p=\prod_{j=1}^n \sigma_j^{pM_{ij}}
=\prod_{j=1}^n g_1^{pM_{ij}x_j}=\big(g_1^p\big)^{\sum_{j=1}^n M_{ij}x_j}
$$
for every $i\in[n]$.
It follows that $y_i=\sum_{j=1}^n M_{ij}x_j\in \mathbb{Z}_q$
for every $i\in[n]$. Hence, $y=Mx$.   Secondly, we have that
$$
e(\pi_i, g_1^p)=
\prod_{j=1}^n e(e(T_{ij}, \sigma_j), g_1^p)=
\prod_{j=1}^n e(g_2^{p^2aM_{ij}x_j}, g_1^p)\cdot
 e(\prod_{j=1}^n {\sf F}_K(i,j)^{x_j}, g_2^{p})=\eta^{py_i}\cdot \tau_i,
$$
i.e., the equality (\ref{equation:check-MM}) holds.
\end{proof}

\noindent
{\bf Security.}
The security of $\Pi_{\rm mm}$ requires that no untrusted server
can cause the client to accept $\bar{y}
\notin\{ Mx, \perp\}$ with a forged proof.
It is based on the 3-co-CDHS assumption   for $\Gamma$
(Lemma \ref{lemma:3-co-cdhs})
 and the DLIN assumption (Definition \ref{definition:dlin}).
\begin{lemma}
\label{theorem:security-mm}
If  the {3-co-CDHS} assumption for $\Gamma$ and the {DLIN} assumption both hold,
then the scheme
$\Pi_{\rm mm}$ is secure.
\end{lemma}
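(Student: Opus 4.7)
The plan is a two-step reduction: first use DLIN to move to a hybrid in which the PRF outputs are truly uniform, then use 3-co-CDHS to argue that the resulting forgery probability is negligible. The structure mirrors the security argument for $\Pi_{\rm pe}$ (Lemma \ref{theorem:security-pe}), in the sense that I would embed a hardness challenge into a scheme secret and use the forgery to extract the solution.

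For the first step I would consider the hybrid experiment $\mathsf{H}$ obtained from ${\sf Exp}_{\cal A}^{\sf Ver}(\Pi_{\rm mm}, M, \lambda)$ by replacing every invocation of ${\sf F}_K(i,j)$ (both in $\mathsf{KeyGen}$, where it forms $T_{ij}$, and in $\mathsf{ProbGen}$, where it forms $\tau_i$) by an independently sampled $F_{ij}\leftarrow G_1$. The pseudorandomness of $\mathsf{PRF}_{\mathrm{dlin}}$, which follows from DLIN in the trilinear setting, implies that $\mathcal{A}$'s forgery probability changes by at most a DLIN-negligible quantity, so it suffices to bound the forgery probability in $\mathsf{H}$.

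For the second step I would construct a simulator $\mathcal{B}$ that, on a 3-co-CDHS instance $(p,q,\Gamma,h_1^A,h_2^B)$, simulates $\mathsf{H}$ with the scheme secret implicitly set to $a\equiv AB\pmod q$ and extracts $h_2^{AB}$ from any successful forgery. Concretely $\mathcal{B}$ samples each $T_{ij}\leftarrow G_1$ uniformly (this distribution is identical to $\mathsf{H}$ since the uniform $F_{ij}$ absorbs the shift $g_1^{p^2 a M_{ij}}$ regardless of $a$), sets up the $\mathrm{BGN}_3$ component honestly, and answers each $\mathsf{ProbGen}$ query by returning only the BGN ciphertexts of the queried input; neither $\tau_i$ nor $\eta$ is ever handed to $\mathcal{A}$, so $\mathcal{B}$ never needs $a$ during the simulation. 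When $\mathcal{A}$ issues the challenge $\hat{x}$ and returns $(\bar{\rho},\bar{\pi})$, $\mathcal{B}$ decrypts each $\bar{\rho}_i$ to $\bar{y}_i$ using $p$, locates an index $i^*$ with $\Delta y_{i^*}:=\bar{y}_{i^*}-(M\hat{x})_{i^*}\not\equiv 0\pmod q$ (one must exist when $\bar{y}\notin\{M\hat{x},\bot\}$), recomputes the honest proof $\pi_{i^*}=\prod_j e(T_{i^*,j},\hat{\sigma}_j)$, and forms $\Xi:=\bar{\pi}_{i^*}\cdot\pi_{i^*}^{-1}\in G_2$. Dividing the forged verification equation by the honest one yields $e(\Xi,g_1^p)=g_3^{p^3 a\Delta y_{i^*}}$, which forces $\Xi^p=h_2^{p^2 a\Delta y_{i^*}}$ in the order-$q$ subgroup $\langle h_2\rangle$; raising this to $(p^2\Delta y_{i^*})^{-1}\bmod q$ extracts $h_2^a=h_2^{AB}$.

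The main obstacle is justifying the implicit assignment $a\equiv AB\pmod q$. In $\mathsf{H}$ the entire transcript seen by $\mathcal{A}$ is independent of $a$ (the uniform $F_{ij}$ absorbs it inside $T_{ij}$, and no other component of $pk$, the $\sigma_i$'s, or $\hat{\sigma}$ touches $a$), so $\mathcal{B}$'s simulation is statistically identical to $\mathsf{H}$ provided $AB\bmod q$ is statistically close to uniform in $\mathbb{Z}_q$ for uniform $A,B\in\mathbb{Z}_N$, which is exactly the conclusion of Lemma \ref{lemma:tech} applied after scaling by $p^{-1}\bmod q$. Consequently $\mathcal{B}$'s 3-co-CDHS advantage equals $\mathcal{A}$'s forgery probability in $\mathsf{H}$, and combining with the first step bounds $\mathcal{A}$'s advantage against $\Pi_{\rm mm}$ by the sum of a DLIN-negligible term and a 3-co-CDHS-negligible term.
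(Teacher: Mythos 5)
Your proof matches the paper's closely: the same two-step hybrid (use DLIN/pseudorandomness to replace ${\sf F}_K(i,j)$ by fresh uniform group elements, then a direct reduction to 3-co-CDHS in the resulting game), the same embedding with $a\equiv AB$, uniform $T_{ij}$ absorbing the shift, and the same extraction by comparing the forged and honest verification equations at a coordinate where $\bar y_{i^*}\neq (M\hat x)_{i^*}$ and isolating $h_2^{AB}$ in the order-$q$ subgroup. You add two small refinements that the paper leaves implicit: you note that $\tau$ and $\eta$ never reach the adversary and so need not be computed by the simulator (the paper's simulator computes $\tau_i$ via the $Z_{ij}$ anyway, which is harmless but unnecessary), and you invoke Lemma~\ref{lemma:tech} to justify that $AB\bmod q$ is statistically close to uniform, so that the implicit assignment $a\equiv AB$ does not perturb the forgery probability. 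The second point is worth highlighting: while the adversary's \emph{view} is indeed independent of $a$, the acceptance event depends on $a\bmod q$ through the (unsent) $\eta$ and $\tau$, so a statement about the distribution of $AB\bmod q$ is genuinely needed for the reduction to be tight, and the paper's proof simply asserts the implicit assignment without this check. Your version is therefore the same argument with a slightly more careful justification of the simulator's perfect-up-to-negligible simulation.
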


\begin{proof}
We define  three games ${\sf G}_0,{\sf G}_1$ and ${\sf G}_2$ as below:
\begin{itemize}
\item[${\sf G}_0:$] this  is the standard security game
              ${\sf Exp}_{\cal A}^{\sf Ver}(\Pi,M,\lambda)$ defined in
              {Fig.} \ref{figure:experiment}.
\item[${\sf G}_1:$] the only difference between this game and ${\sf G}_0$
is a change to  {\sf ProbGen}. For any $(x_1,\ldots,x_n)$ queried  by the adversary, instead of computing
$\tau$ using the efficient {\sf CFE} algorithm, the inefficient evaluation of $\tau_i$ is used,
i.e., $\tau_i=\prod_{j=1}^n e({\sf F}_K(i,j)^{x_j}, g_2^{p})$ for every $i\in[n]$.
\item[${\sf G}_2:$] the only difference between this game and ${\sf G}_1$ is that
the matrix $T$ is computed as $T_{ij}=g_1^{p^2aM_{ij}}\cdot R_{ij}$,
where $R_{ij}\leftarrow G_1$  for every $i,j\in[n]$.
\end{itemize}
For every $i\in \{0,1,2\}$, we denote by
${\sf G}_i({\cal A})$ the output of game $i$ when it is run with an adversary ${\cal A}$.
The proof of the theorem proceeds by a standard hybrid argument, and is obtained by combining
the proofs of the following three claims.

\vspace{1mm}
\noindent\underline{\rm Claim 1}. We have that $\Pr[{\sf G}_0({\cal A})=1]=
\Pr[{\sf G}_1({\cal A})=1]$.

The only difference between ${\sf G}_1$ and  ${\sf G_0}$
is in the
computation of $\tau$.   Due to the correctness of the ${\sf CFE}$
algorithm,  such difference does not change the distribution of the
values  $\tau$ returned to the adversary. Therefore,
 the probabilities that ${\cal A}$
wins in both games are identical.

\vspace{0.1cm}
\noindent\underline{\rm Claim 2}. We have that $|\Pr[{\sf G}_1({\cal A})=1]-
\Pr[{\sf G}_2({\cal A})=1]|<{\sf neg}(\lambda)$.

The only difference between  ${\sf G}_2$ and ${\sf G_1}$ is
 that we replace the pseudorandom group elements  ${\sf F}_K(i,j)$ with truly random
 group elements $R_{ij}\leftarrow G_1$ for every $i,j\in [n]$.
Clearly, if $|\Pr[{\sf G}_1({\cal A})=1]-
\Pr[{\sf G}_2({\cal A})=1]|$ is non-negligible, we can construct an simulator ${\cal B}$ that
simulates  ${\cal A}$ and breaks the pseudorandom property of  $\sf PRF$
with a non-negligible advantage.

\vspace{0.1cm}
\noindent\underline{\rm Claim 3}. We have that $\Pr[{\sf G}_2({\cal A})=1]<
{\sf neg}(\lambda)$.

 Suppose that there is  a  PPT adversary
${\cal A}$ that wins with   non-negligible probability $\epsilon$ in  ${\sf G}_2$.
We want to construct a PPT simulator ${\cal B}$ that simulates
${\cal A}$ and breaks the 3-co-CDHS assumption
 (see Definition \ref{definition:3-co-cdhs}) with  non-negligible probability.
The adversary ${\cal B}$ takes as input a tuple
 $(p,q,\Gamma,h_1^\alpha, h_2^\beta)$, where $h_1=g_1^p, h_2=g_2^p$ and $\alpha,\beta\leftarrow \mathbb{Z}_N$.
The adversary   ${\cal B}$ is required to output
$h_2^{\alpha\beta}$. In order to do so, ${\cal B}$ simulates ${\cal A}$ as below:
\begin{enumerate}
\item[(A)] Pick an  $n\times n$ matrix $M$ and mimic the ${\sf KeyGen}$
 of game ${\sf G}_2$ as below:
\begin{itemize}
\item implicitly set $a=\alpha\beta$ by computing
$\eta=e(h_1^{\alpha}, h_2^{\beta})=g_3^{p^2\alpha\beta}$;
\item pick  $u\leftarrow G_1$, compute $h=u^q$ and set up
${\rm BGN}_3$ with public key $(\Gamma,g_1,h)$ and secret key $p$;
\item pick  $T_{ij}
\leftarrow G_1$ for every $i,j\in [n]$ and send $pk=(\Gamma,g_1,h,M,T)$ to ${\cal A}$, where   $T=(T_{ij})$;
\end{itemize}
\item[(B)] Upon receiving a query $x=(x_1,\ldots, x_n)$ from   ${\cal A}$,
mimic ${\sf ProbGen}$ as below:
\begin{itemize}
\item  for every $j\in [n]$, pick  $r_j\leftarrow \mathbb{Z}_N$ and
compute  $\sigma_j=g_1^{x_j}h^{r_j}$;
\item for every $i,j\in [n]$, compute
$Z_{ij}=e(T_{ij},g_2^{px_j})/\eta^{pM_{ij}x_j}$;
\item for every $i\in[n]$, compute $\tau_i=\prod_{j=1}^n Z_{ij}$;
\item send $\sigma=(\sigma_1,\ldots, \sigma_n)$ to ${\cal A}$.
\end{itemize}
\end{enumerate}
It is straightforward to verify that the $pk, \sigma$ and $\tau$ generated
by ${\cal B}$  are identically
distributed to those generated by the client in game ${\sf G}_2$.
We remark that (A) is the step 1 in  ${\sf Exp}_{\cal A}^{\sf Ver}(\Pi, M,\lambda)$
(see {Fig.} \ref{figure:experiment}) and
(B) consists of steps 3 and 4 in  ${\sf Exp}_{\cal A}^{\sf Ver}(\Pi, M,\lambda)$.
Furthermore, (B) may be run $l={\sf poly}(\lambda)$  times as
described by step 2 of  ${\sf Exp}_{\cal A}^{\sf Ver}(\Pi, M,\lambda)$.
After $l$  executions of (B),
 the adversary ${\cal A}$ will provide an input
 $\hat{x}=(\hat{x}_1,\ldots, \hat{x}_n)$ on which he is willing to be challenged.
Upon receiving  $\hat{x}$, the simulator ${\cal B}$ mimics
 ${\sf ProbGen}$ as (B) and gives
${\cal A}$ an encoded input $\hat{\sigma}$. Then the adversary ${\cal A}$
may maliciously reply with
 $\bar{\rho}=(\bar{\rho}_1,\ldots, \bar{\rho}_n)$ and
$\bar{\pi}=(\bar{\pi}_1,\ldots, \bar{\pi}_n)$ such that
${\sf Verify}(sk,\hat{\tau},\bar{\rho}, \bar{\pi})\triangleq\bar{y}\notin \{
M\hat{x}, \perp\}$.
On the other hand,
an honest server in our VC scheme will reply with
 $\hat{\rho}=\
(\hat{\rho}_1,\ldots, \hat{\rho}_n)$
and $\hat{\pi}=(\hat{\pi}_1,\ldots, \hat{\pi}_n)$.
Due to  Lemma \ref{theorem:MM-correctness}, it must be the case that
${\sf Verify}(sk,\hat{\tau}, \hat{\rho},\hat{\pi})\triangleq \hat{y}=M\hat{x}$.
Note that the event
$\bar{y}\notin \{M\hat{x},\perp\}$ occurs with probability $\epsilon$.
Suppose it occurs. Then there
is an integer $i\in[n]$ such that
$\bar{y}_i\neq \hat{y}_i$.
Note that neither $\bar{y}$ nor $\hat{y}$ is $\perp$,
the equation (\ref{equation:check-MM}) must be satisfied by both
$(\bar{y}, \bar{\pi})$ and
$(\hat{y}, \hat{\pi})$, which translates into
 $
      e(\bar{\pi}_i, g_1^p)=\eta^{p\bar{y}_i}\cdot \hat{\tau}_i
$ and $
e(\hat{\pi}_i, g_1^p)=\eta^{p\hat{y}_i}\cdot \hat{\tau}_i$, we have that
$$e(\hat{\pi}_i/\bar{\pi}_i, g_1^p)=
\eta^{p(\hat{y}_i-\bar{y}_i)}=e(g_2^{p^2\alpha\beta(\hat{y}_i-\bar{y}_i)}, g_1^p),
$$
which in turn implies that
$\hat{\pi}_i/\bar{\pi}_i=g_2^{p\alpha\beta \cdot p(\hat{y}_i-\bar{y}_i)}$.
Let $\phi\in \mathbb{Z}_q^*$ be the multiplicative inverse of
$p(\hat{y}_i-\bar{y}_i)\in \mathbb{Z}_q^*$. Then
$g_2^{p\alpha\beta}=(\hat{\pi}_i/\bar{\pi}_i)^{\phi}$, i.e.,
$h_2^{\alpha\beta}=(\hat{\pi}_i/\bar{\pi}_i)^\phi$, which implies that
 ${\cal B}$ can break the 3-co-CDHS with probability at least $\epsilon$.
Therefore, this $\epsilon$ must be negligible in $\lambda$, i.e.,  $\Pr[{\sf G}_2({\cal A})=1]<
{\sf neg}(\lambda)$.
\end{proof}

\noindent
{\bf Privacy.}
The input privacy of  $\Pi_{\rm mm}$ requires that no untrusted server
can distinguish between different  inputs of the client. This is formally
defined by the
experiment ${\sf Exp}_{\cal A}^{\sf Pri}(\Pi, f,\lambda)$ in {Fig.} \ref{figure:experiment}.
The client in our VC scheme encrypts its input $x$ using ${\rm BGN}_{3}$
which is semantically secure under SDA for $\Gamma$. As a result,  $\Pi_{\rm mm}$
 achieves input privacy under SDA for $\Gamma$ (see
 Appendix \ref{appendix:privacy-MM} for proof).

\begin{lemma}
\label{theorem:privacy-MM}
If  the {SDA}  for $\Gamma$ holds,
then  $\Pi_{\rm mm}$
achieves the input privacy.
\end{lemma}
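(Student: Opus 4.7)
The plan is to reduce the input privacy of $\Pi_{\rm mm}$ to the semantic security of ${\rm BGN}_3$, which itself follows from SDA as noted in Section~\ref{section:BGN}, via a standard hybrid argument over the $n$ coordinates of the challenge ciphertext vector. The key observation is that in ${\sf Exp}_{\cal A}^{\sf Pri}(\Pi_{\rm mm},M,\lambda)$ only the challenge vector $\sigma^\ast=(\sigma_1^\ast,\ldots,\sigma_n^\ast)$ depends on $b$, because each $\sigma_j^\ast=g_1^{x^\ast_{b,j}}h^{r_j^\ast}$ is a fresh ${\rm BGN}_3$ encryption while the public key $pk=(\Gamma,g_1,h,M,T)$ is fixed at setup and the ${\sf PubProbGen}$ oracle releases only $\sigma$ and never $\tau$ (as made explicit by the remark attached to Fig.~\ref{figure:experiment}). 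Thus input privacy reduces to multi-message indistinguishability of ${\rm BGN}_3$.

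I would define hybrid games $G_0,G_1,\ldots,G_n$ in which the challenger sets $\sigma_j^\ast={\sf Enc}(x^\ast_{1,j})$ for $j\le k$ and $\sigma_j^\ast={\sf Enc}(x^\ast_{0,j})$ for $j>k$; then $G_0$ matches the experiment conditioned on $b=0$ and $G_n$ matches the experiment conditioned on $b=1$, so it suffices to bound the gap $\Delta_k=|\Pr[G_{k-1}({\cal A})=1]-\Pr[G_k({\cal A})=1]|$ for every $k\in[n]$. For each $k$, a PPT simulator ${\cal B}_k$ translates $\Delta_k$ into an IND-CPA advantage against ${\rm BGN}_3$: the simulator runs $\Pi_{\rm mm}$'s ${\sf KeyGen}$ honestly (generating $\Gamma$ with its hidden factorization, together with $M$, $K$, $a$) so that $pk$ including $T$ has exactly the real distribution; it answers each ${\sf PubProbGen}$ query by a fresh honest encryption $g_1^{x_j}h^{r_j}$; once ${\cal A}$ commits to $(x_0^\ast,x_1^\ast)$, it invokes its BGN IND-CPA challenger on the message pair $(x^\ast_{0,k},x^\ast_{1,k})$, receives a challenge ciphertext $C^\ast$, and assembles $\sigma^\ast$ by placing ${\sf Enc}(x^\ast_{1,j})$ for $j<k$, $C^\ast$ at coordinate $k$, and ${\sf Enc}(x^\ast_{0,j})$ for $j>k$. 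By inspection the transcript is distributed as $G_{k-1}$ when $C^\ast$ encrypts $x^\ast_{0,k}$ and as $G_k$ when it encrypts $x^\ast_{1,k}$, so ${\cal B}_k$ inherits the advantage $\Delta_k$.

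Summing $\Delta_k$ over $k=1,\ldots,n$ and invoking ${\rm BGN}_3$'s semantic security under SDA yields that the total advantage of any PPT ${\cal A}$ is at most $n\cdot{\sf neg}(\lambda)={\sf neg}(\lambda)$, giving the lemma. The main subtlety I anticipate is that constructing the blinded matrix $T_{ij}=g_1^{p^2aM_{ij}}\cdot{\sf F}_K(i,j)$ inside ${\cal B}_k$ requires the hidden factor $p$, which a generic BGN IND-CPA adversary does not possess; I would handle this by having ${\cal B}_k$ sample $\Gamma$ together with its factorization itself and invoke SDA at the level of each hybrid step rather than treating BGN semantic security as a fully external black box. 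Since $p$ and $q$ never leave ${\cal B}_k$ and in particular are never revealed to ${\cal A}$, the SDA-based indistinguishability of adjacent hybrids still applies to ${\cal A}$'s effective view of the game.
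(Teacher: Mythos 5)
Your end-game is the same as the paper's final step: a coordinate-by-coordinate hybrid over $\sigma^\ast$, reducing each adjacent pair to ${\rm BGN}_3$ indistinguishability and hence to SDA, at a loss of a factor $n$. Where you diverge is that you try to run this hybrid directly against the real scheme, and your proposed workaround for the resulting obstruction does not work.

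The obstruction you yourself flag is genuine: $T_{ij}=g_1^{p^2aM_{ij}}\cdot{\sf F}_K(i,j)$ depends on the hidden factor $p$, so a reduction that embeds an SDA (or BGN IND-CPA) challenge cannot also know $p$. Your fix --- ``have ${\cal B}_k$ sample $\Gamma$ together with its factorization and invoke SDA at the level of each hybrid step'' --- is circular. If ${\cal B}_k$ generates $\Gamma$ with its factorization, there is no external SDA challenge to embed and no reduction to speak of; the statement that ``the SDA-based indistinguishability of adjacent hybrids still applies to ${\cal A}$'s effective view'' is then just a reassertion of what needs proving. Conversely, if ${\cal B}_k$ is to be a proper SDA adversary it receives $\Gamma$ (and the challenge element) \emph{without} the factorization, and then it cannot form $T$ at all. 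One also cannot argue that $T$ is harmless auxiliary information: $T$ leaks $g_1^{p^2aM_{ij}}$ up to a pseudorandom mask, and without a further argument you have no control over how this interacts with the BGN challenge.

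The missing ingredient is the paper's pair of game transitions preceding the hybrid. First move from the real privacy game to one where $\tau$ is computed inefficiently (identical distribution, zero loss), and then to a game where every ${\sf F}_K(i,j)$ in $T_{ij}$ is replaced by an independent uniform $R_{ij}\leftarrow G_1$; the gap here is bounded by the pseudorandomness of ${\sf PRF}_{\rm dlin}$, i.e.\ by DLIN. In that final game each $T_{ij}=g_1^{p^2aM_{ij}}\cdot R_{ij}$ is \emph{itself} uniform in $G_1$, so the SDA reduction can sample $T$ (and ignore $\eta$, which is never sent to ${\cal A}$) without knowing $p$, and your hybrid over the $n$ coordinates of $\sigma^\ast$ then goes through exactly as you describe. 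A side effect is that the proof uses DLIN in addition to SDA --- the lemma's hypothesis as stated in the paper is loose on this point, but the combined Theorem for $\Pi_{\rm mm}$ lists all of SDA, DLIN, and 3-co-CDHS.
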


\noindent
{\bf Efficiency.}
In order to verifiably compute $Mx$ with the cloud, the client
computes $n$ ciphertexts $\sigma_1,\ldots, \sigma_k$  under ${\rm BGN}_3$
and $n$ verification keys $\tau_1,\ldots, \tau_n$
in the execution of $\sf ProbGen$; it also   decrypts $n$ ciphertext
$\rho={\sf Enc}(Mx)$
under ${\rm BGN}_3$ and then verifies the equation (\ref{equation:check-MM}).
%Note that the successful decryption of $\rho$ requires that the components of
%$Mx$ belong to a polynomial size domain $\mathbb{M}$, which is a limitation of
%${\rm BGN}_2$ and inherited by
% ${\rm BGN}_3$. As long as this is true,
The overall computation of the client
 will be $O(n)=o(n^2)$ and therefore $\Pi_{\rm pe}$ is outsourced.
 On the other hand, the server needs to perform $O(n^2)$
multilinear map computations and $O(n)$ exponentiations in each execution of
$\sf Compute$, which is comparable with the VC schemes based on FHE.
% In Section \ref{section:discussion} and \ref{section:applications}, we shall see that
%this limitation does not affect the application of $\Pi_{\rm mm}$ in outsourcing PIR.
Based on Lemmas \ref{theorem:MM-correctness},
\ref{theorem:security-mm},  \ref{theorem:privacy-MM}
and the  efficiency analysis,  we have the following theorem.
\begin{theorem}
If  the {3-co-CDHS},  {DLIN} and  {SDA} assumptions  for $\Gamma$ all hold,
then  $\Pi_{\rm mm}$ is a VC scheme with input privacy.
\end{theorem}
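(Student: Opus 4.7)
The plan is to assemble the theorem directly from the four ingredients established earlier in Section \ref{section:MM}, namely the correctness lemma (Lemma \ref{theorem:MM-correctness}), the security lemma (Lemma \ref{theorem:security-mm}), the privacy lemma (Lemma \ref{theorem:privacy-MM}), and the efficiency discussion immediately preceding the statement. Since the VC definition in Section \ref{section:vc} decomposes ``VC scheme with input privacy'' into correctness, security, input privacy, and being outsourced, the proof is a straightforward conjunction argument, and almost no new calculation is required.

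First I would invoke Lemma \ref{theorem:MM-correctness}, which unconditionally guarantees ${\cal F}$-correctness: for any $M$ and any input $x$ to $Mx$, an honest server's reply $(\rho,\pi)$ yields $y=Mx$ and satisfies the verification equation (\ref{equation:check-MM}). Next, under the 3-co-CDHS assumption for $\Gamma$ and the DLIN assumption, Lemma \ref{theorem:security-mm} implies ${\cal F}$-security, i.e., for every PPT adversary ${\cal A}$ the probability $\Pr[{\sf Exp}_{\cal A}^{\sf Ver}(\Pi_{\rm mm},M,\lambda)=1]$ is negligible. Third, under the SDA assumption for $\Gamma$, Lemma \ref{theorem:privacy-MM} yields input privacy, i.e., for every PPT adversary ${\cal A}$ the probability $\Pr[{\sf Exp}_{\cal A}^{\sf Pri}(\Pi_{\rm mm},M,\lambda)=1]$ is negligibly close to $1/2$ in the sense of Definition \ref{definition:privacy}.

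Finally, I would point to the efficiency paragraph above: in ${\sf ProbGen}$ the client produces $n$ ciphertexts $\sigma_j=g_1^{x_j}h^{r_j}$ and $n$ verification values $\tau_i$ computed via the closed-form-efficient algorithm ${\sf CFE}$ of ${\sf PRF}_{\rm dlin}$ in $O(n)$ operations, and in ${\sf Verify}$ the client performs $n$ decryptions (each of which is a discrete-log lookup over the polynomial-size message domain) and $n$ pairing checks of (\ref{equation:check-MM}). The total work is $O(n)=o(n^2)$, while computing $Mx$ directly takes $\Theta(n^2)$, so $\Pi_{\rm mm}$ meets the ``outsourced'' condition of Section \ref{section:vc}. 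Combining correctness, security, input privacy, and outsourceability yields the theorem.

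There is no genuine obstacle here, since all the real work was done in the three preceding lemmas; the only mild subtlety is making sure that the assumptions invoked are consistent with how each lemma is stated (3-co-CDHS and DLIN for security, SDA for privacy, none for correctness) and that the efficiency bound is expressed in the same asymptotic form used by the outsourced-ness definition. I would therefore present the proof as a short three-sentence summary that cites Lemmas \ref{theorem:MM-correctness}, \ref{theorem:security-mm} and \ref{theorem:privacy-MM} together with the efficiency remark, and concludes that $\Pi_{\rm mm}$ satisfies every clause of Definitions given in Section \ref{section:vc}.
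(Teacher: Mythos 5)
Your proposal matches the paper's own argument exactly: the paper simply states that the theorem follows ``Based on Lemmas \ref{theorem:MM-correctness}, \ref{theorem:security-mm}, \ref{theorem:privacy-MM} and the efficiency analysis,'' which is precisely the four-ingredient conjunction (correctness, security, input privacy, outsourceability) you assemble. Your allocation of the hypotheses to the lemmas (3-co-CDHS and DLIN for security, SDA for privacy, none for correctness) and the $O(n)=o(n^2)$ efficiency bound are all consistent with the paper.
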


\subsection{Discussions}
\label{section:discussion}

A theoretical limitation of our VC schemes $\Pi_{\rm pe}$
and $\Pi_{\rm mm}$ is that the computation results (i.e., $f(\alpha)$ and
$Mx$) must belong to a polynomial size domain
$\mathbb{M}$ since otherwise the client will not be able to decrypt $\rho$ and
then verify its correctness. However, we stress that this is not a real limitation when we
apply both schemes in outsourcing PIR (see Section \ref{section:applications})
where the computation results are either 0 or 1.
On the other hand,  with $f(x)$ and the knowledge ``$f(\alpha)\in \mathbb{M}$"
(resp. $M$ and  the knowledge
``$Mx\subseteq \mathbb{M}$"), one may argue that
the cloud can also learn a polynomial size domain $\mathbb{D}$ where $\alpha$ (resp. $Mx$) is
drawn from and therefore guess the actual value of $\alpha$ (resp.
 $x$) with non-negligible
probability. However, recall  that our privacy experiment
${\sf Exp}_{\cal A}^{\sf Pri}(\Pi, f,\lambda)$ in {Fig.} \ref{figure:experiment}
only requires   the indistinguishability of different inputs. This is achieved
 by $\Pi_{\rm pe}$ and $\Pi_{\rm mm}$ (though for polynomial size domains)
and suffices for our applications.
Furthermore,
in Section \ref{section:ext},
we shall show how to modify $\Pi_{\rm pe}$  and $\Pi_{\rm mm}$
 such that the functions (i.e., $f(x)$ and $M$) are encrypted and then given to the cloud.
 As a consequence, the cloud learns  no information on
 either the outsourced function or input unless it can break the underlying
encryption scheme.

\subsection{Function Privacy}
\label{section:ext}
Note that  $\Pi_{\rm pe}$ and $\Pi_{\rm mm}$
only achieve input privacy.  We say that a VC scheme achieves {\em function privacy} if
the server cannot learn any information about the outsourced  function.
A formal definition of function privacy can be given using an experiment
similar to ${\sf Exp}_{\cal A}^{\sf Pri}(\Pi, f,\lambda)$.
Both
$\Pi_{\rm pe}$ and $\Pi_{\rm mm}$ can be modified such that
function privacy is also achieved. In the modified VC scheme
$\Pi_{\rm pe}^\prime$  (see {Fig.} \ref{figure:pp-vc-pe}
in Appendix \ref{appendix:modified-vc}), the client gives
${\rm BGN}_{2k+2}$ ciphertexts ${\sf Enc}(f)=({\sf Enc}(f_0),\ldots, {\sf Enc}(f_n))$
and $\sigma=({\sf Enc}(\alpha),\ldots, {\sf Enc}(\alpha^{2^{k-1}}))$ to the server.
Then the server can compute  $\rho={\sf Enc}(f(\alpha))$ along with a proof $\pi={\sf Enc}(c(s))$ using
${\sf Enc}(f)$ and
$\sigma$.
In the modified VC scheme  $\Pi_{\rm mm}^\prime$ (see {Fig.} \ref{figure:pp-vc-mm} in Appendix
\ref{appendix:modified-vc}), the client gives  ${\rm BGN}_{3}$  ciphertexts
${\sf Enc}(M)=({\sf Enc}(M_{ij}))$ and $\sigma=({\sf Enc}(x_1),\ldots, {\sf Enc}(x_n))$ to the server.
Then  the server can compute ${\sf Enc}(\sum_{j=1}^n M_{ij}x_j)$ along with a proof
$\pi_i$ using
${\sf Enc}(M)$ and  $\sigma$ for every $i\in[n]$.
It is not hard to prove that the schemes $\Pi_{\rm pe}^\prime$
and $\Pi_{\rm mm}^\prime$ are secure and achieve both input and  function privacy.

\section{Applications}
\label{section:applications}

Our VC schemes  have application in  outsourcing
{private information retrieval} (PIR).
PIR  \cite{KO97} allows a client  to retrieve any bit $w_i$
of a database $w=w_1\cdots w_n\in \{0,1\}^n$ from a  remote server  without revealing
$i$ to the server.
In a trivial solution of PIR, the client simply downloads  $w$
and extracts $w_i$.
The main drawback of this solution is its prohibitive communication cost (i.e. $n$).
In  \cite{KO97,CMS99,GR05}, PIR schemes with non-trivial communication complexity
$o(n)$ have been constructed  based on various  cryptographic assumptions.
However, all of them  assume that the server is
{\em honest-but-curious}.
In real-life scenarios, the server may have
 strong incentive to give the client  an incorrect response.
 Such malicious behaviors may cause the client to make completely wrong
 decisions in its economic activities (say the client is retrieving price information
 from a stock database and deciding in which stock it is going to invest).
Therefore, PIR schemes that are secure against  malicious severs
are very interesting.
In particular, outsoursing  PIR  to untrusted clouds in the modern age of cloud computing
is very interesting.
Both of our VC schemes  can provide easy solutions in outsourcing PIR.
 Using
$\Pi_{\rm pe}$, the client can outsource a degree $n$ polynomial $f(x)$  to the cloud,
where $f(i)=w_i$ for every $i\in[n]$.
To privately retrieve $w_i$, the client can execute
$\Pi_{\rm pe}$  with input $i$.
In this solution, the communication cost consists of
$O(\log n)$ group elements.
 Using $\Pi_{\rm mm}$, the client can represent the $w$ as
 a square matrix  $M=(M_{ij})$  of
order $\sqrt{n}$ and delegate $M$ to the cloud.
To privately retrieve a bit $M_{ij}$,
the client can execute $\Pi_{\rm mm}$ with input
$x\in \{0,1\}^{\sqrt{n}}$,
where $x_j=1$ and all the other bits are 0.
In this solution, the communication cost consists of
 $O(\sqrt{n})$ group elements.
 Note that in our outsourced  PIR schemes, the computation results
 always belong to $\{0,1\}\subseteq \mathbb{M}$. Therefore, the theoretical
 limitation we discussed in Section \ref{section:discussion} does not really affect
the application of  our VC schemes in outsourcing PIR.

\section{Conclusions}
\label{section:conclusions}

In this paper, we  constructed privacy preserving
 VC schemes for both univariate polynomial evaluation and matrix multiplication,
 which have useful applications in  outsourcing  PIR.
Our main tools  are the recently developed multilinear
maps. A theoretical limitation of our
constructions is that the results of the computations should belong to a
polynomial-size domain. Although this limitation does not really affect
their applications in outsourcing PIR,
it is still interesting to remove it in the future works.
We also note that our VC schemes are only privately verifiable.
It is also interesting to construct privacy preserving
VC schemes that are publicly verifiable.

\appendix

\section{Proof for Lemma \ref{lemma:sda}}
\label{appendix:sda}

{\bf Lemma 2.1}
{\em
If  ${\rm SDA}_i$ holds, then ${\rm SDA}_j$ holds for every $j=i+1,\ldots,k$.
}

\vspace{0.25cm}

Suppose  there is a PPT algorithm $\mathcal{A}$ such that
\begin{equation}
\label{equation:jth-subgroup-decision}
|\Pr[v\leftarrow G_j: \mathcal{A}(\Gamma, v)=1]-\Pr[v\leftarrow G_j:
 \mathcal{A}(\Gamma, v^q)=1]|\geq \epsilon
 \end{equation}
 for some integer $j\in\{i+1,\ldots, k\}$.
 We shall construct a PPT algorithm $\mathcal{B}$
 that breaks ${\rm SDA}_i$ with the same advantage,
 where the $\cal B$ is given a pair $(\Gamma, u)$
 and must decide whether $u\leftarrow G_i$
 or $u=\alpha^q$ for  $\alpha\leftarrow G_i$.
  On input
 $(\Gamma,  u)$, the algorithm  $\mathcal{B}$ computes $v=e(u,g_{j-i})$ and gives $(\Gamma, v)$ to
  $\mathcal{A}$.
Upon receiving the output bit $b$ of $\mathcal{A}$, the algorithm
 $\mathcal{B}$  will also output $b$.
Clearly,  $v$ will be uniformly distributed over $G_j$ if $u\leftarrow G_i$. On the other hand,
 $v=e(\alpha, g_{j-i})^q\in G_j$  will be  a random element of order $p$ if
 $u=\alpha^q$ for $\alpha\leftarrow G_i$.
 Due to (\ref{equation:jth-subgroup-decision}), the algorithm
$\mathcal{B}$ can distinguish the two cases for $v$ with advantage $\epsilon$
and therefore distinguish the two cases for $u$ with the same advantage $\epsilon$ (
i.e., break ${\rm SDA}_i$ with advantage $\epsilon$).
Since ${\rm SDA}_i$ holds, the $\epsilon$ must be negligible in the security
parameter $\lambda$. Therefore, ${\rm SDA}_j$ must  also hold for every $j=i+1,\ldots, k$.

\section{Proof for Lemma \ref{lemma:msdhs}}
\label{appendix:msdhs}

{\em {\bf Lemma 2.2}
If $(k,n)$-MSDH holds, then except for a negligible fraction of the
$k$-multilinear map instances $\Gamma_k\leftarrow {\cal G}(1^\lambda,k)$,
either $\Pr[\mathcal{A}(p,q, \Gamma_k, g_1, g_1^s,  \ldots,
g_1^{s^n})=g_k^{p/s}]
<{\sf neg}(\lambda)$  for any PPT algorithm $\cal A$ or $\Pr[\mathcal{A}(p,q, \Gamma_k,  g_1, g_1^s,  \ldots,
g_1^{s^n})= g_k^{q/s}]
<{\sf neg}(\lambda)$  for any PPT algorithm $\cal A$, where the probabilities are taken over
 $s\leftarrow \mathbb{Z}_N$
and ${\cal A}$'s random coins.
}

\vspace{0.25cm}
Firstly, under the $(k,n)$-MSDH assumption, for any PPT algorithm $\cal A$, we must have that
\begin{equation}
\label{equation:msdh-N}
\Pr[{\cal A}(p,q, \Gamma_k,g_1,g_1^s.\ldots, g_1^{s^n})=g_1^{1/s}]< {\sf neg}(\lambda),
\end{equation}
where the probability is taken over $\Gamma_k\leftarrow {\cal G}(1^\lambda,k)$,
 $s\leftarrow \mathbb{Z}_N$ and ${\cal A}$'s random coins. This is true because otherwise there would be a PPT algorithm
which can find $\alpha=0$, compute $g_1^{1/(s+\alpha)}$
with non-negligible probability and therefore break the $(k,n)$-MSDH assumption.

Suppose there are $\epsilon$  fraction ($\epsilon$ non-negligible) of the
 $k$-multilinear map instances $\Gamma_k\leftarrow {\cal G}(1^\lambda,k)$ for each of which  there are two PPT algorithms ${\cal A}_1$
 and ${\cal A}_2$ such that
\begin{equation}
\label{equation:msdh-pq}
\begin{split}
\Pr[\mathcal{A}_1(p,q, \Gamma_k, g_1, g_1^s,  \ldots,
g_1^{s^n})=g_k^{p/s}]{\rm ~and~}
\Pr[\mathcal{A}_2(p,q, \Gamma_k, g_1, g_1^s,  \ldots,
g_1^{s^n})=g_k^{q/s}]
\end{split}
\end{equation}
are both non-negligible,
where the probabilities are taken over
 $s\leftarrow \mathbb{Z}_N$ and the random coins of ${\cal A}_1$ and ${\cal A}_2$.
 We call such an instance $\Gamma_k$ {\em excellent}.
 Fix  an excellent instance $\Gamma_k$.
We say that $s\in \mathbb{Z}_N$ is {\em good} if
$\Pr[\mathcal{A}_1(p,q, \Gamma_k, g_1, g_1^s,  \ldots,
g_1^{s^n})=g_k^{p/s}]$ is non-negligible and
{\em nice} if $\Pr[\mathcal{A}_2(p,q, \Gamma_k, g_1, g_1^s,  \ldots,
g_1^{s^n})=g_k^{q/s}]$ is non-negligible, where the probabilities are taken over the
random coins of ${\cal A}_1$ and ${\cal A}_2$, respectively. For every $s\in\mathbb{Z}_N$, define $X_s$ and $Y_s$
to be two 0-1 random variables, where $X_s=1$ iff $s$ is good and
$Y_s=1$  iff $s$ is nice.  Due to (\ref{equation:msdh-pq}),
we must have that $\Pr[X_s=1]\triangleq \delta_1$ and $\Pr[Y_s=1]\triangleq \delta_2$ are both non-negligible, where the probabilities are
taken over $s\leftarrow \mathbb{Z}_N$. This so  because
otherwise both probabilities in (\ref{equation:msdh-pq}) would be negligible
as they must be taken over $s\leftarrow \mathbb{Z}_N$. Below we construct a simulator
$\cal S$ that simulates ${\cal A}_1$ and ${\cal A}_2$ and
contradicts to  (\ref{equation:msdh-N}).
The simulator $\cal S$ is given a tuple $(p,q,\Gamma_k,g_1, g_1^s, \ldots, g_1^{s^n})$
and required to compute $g_k^{1/s}$, where $s\leftarrow \mathbb{Z}_N$.
In order to do so the simulator $\cal S$ picks $s_1, s_2\leftarrow \mathbb{Z}_N$
independently. Let $\hat{s}_1=ss_1$ and $\hat{s}_2=ss_2$. It feeds ${\cal A}_1$ and ${\cal A}_2$ with
$(p,q,\Gamma_k, g_1, g_1^{\hat{s}_1}, \ldots, g_1^{\hat{s}_1^n})$
and $(p,q,\Gamma_k, g_1, g_1^{\hat{s}_2}, \ldots, g_1^{\hat{s}_2^n})$, respectively.
Note that the distributions of both $\hat{s}_1$ and $\hat{s}_2$ are
statistically close to the uniform distribution over $\mathbb{Z}_N$.
Therefore, with probability $\geq \delta_1-{\sf neg}(\lambda)$, $\hat{s}_1$ is good;
with probability $\geq \delta_2-{\sf neg}(\lambda)$,
$\hat{s}_2$ is nice.  As a consequence, with non-negligible  probabilities
${\cal A}_1$ and ${\cal A}_2$ will return $g_1^{p/\hat{s}_1}$ and $g_1^{q/\hat{s}_2}$,
respectively.  As a consequence, the simulator $\cal S$ can compute
$g_1^{p/s}$ and $g_1^{q/s}$ with non-negligible probability  since it knows
$s_1$ and $s_2$. Let $u,v$ be integers such that $up+vq=1$. Then the simulator
$\cal S$ can furthermore compute $g_1^{1/s}=(g_1^{p/s})^u(g_1^{q/s})^v$ with the same
(non-negligible) probability. In other words, we have that
$\Pr[{\cal S}(p,q, \Gamma_k,g_1,g_1^s.\ldots, g_1^{s^n})=g_1^{1/s}]$
is non-negligible for the fixed excellent instance $\Gamma_k$, where the probability is
taken over $s\leftarrow \mathbb{Z}_N$ and the random coins of $\cal S$.
Since the fraction $\epsilon$ of excellent  instances is also non-negligible,
we have that
 $\Pr[{\cal S}(p,q, \Gamma_k,g_1,g_1^s.\ldots, g_1^{s^n})=g_1^{1/s}]
 \geq \Pr[{\cal A}(p,q, \Gamma_k,g_1,g_1^s.\ldots, g_1^{s^n})=g_1^{1/s}
 |\Gamma_k{\rm~is~excellent}]\Pr[\Gamma_k{\rm~is~excellent}]$,
which is also non-negligible and contradicts to (\ref{equation:msdh-N}).
The lemma follows.

\section{Proof for Lemma \ref{lemma:tech}}
\label{appendix:tech}

{\bf Lemma 2.2}
{\em
Let $X$ and $Y$  be two  uniform random variables over  $\mathbb{Z}_N$.
 Then the random variable  $Z= pXY\bmod q$
is {\em statistically close} to the uniform   random variable
$U$ over $\mathbb{Z}_q$, i.e.,  we have that
$\sum_{\omega\in \mathbb{Z}_q}|\Pr[Z=\omega]-\Pr[U=\omega]|<{\sf neg}(\lambda).$
}

\vspace{0.25cm}

Let  $W=XY\bmod N$. Then $Z=pW\bmod q$.
We firstly  determine the distribution of $W$.
Since both $X$ and $Y$ are uniformly distributed over $\mathbb{Z}_N$,
it is easy to see that $$\Pr[W=w]=\frac{|\{(x,y)\in \mathbb{Z}_N^2: xy=w\bmod N\}|}{N^2}$$
for every $w\in \mathbb{Z}_N$. Therefore, it suffices to determine the number
${\cal N}_w$ of pairs
$(x,y)\in \mathbb{Z}_N^2$ such that $xy=w\bmod N$ for every $w\in \mathbb{Z}_N$.
Let $S_1=\{w: w =0\bmod p{\rm~and~}w=0\bmod q\}$,
$S_2=\{w: w \neq 0\bmod p{\rm~and~}w\neq 0\bmod q\}$,
$S_3=\{w: w =0\bmod p{\rm~and~}w\neq 0\bmod q\}$, and
$S_4=\{w: w\neq 0\bmod p{\rm~and~}w=0\bmod q\}$.
Clearly, we have that $|S_1|=1, |S_2|=N-p-q+1, |S_3|=q-1$ and $ |S_4|=p-1$.
Simple calculations show that
\begin{equation}
\label{equation:card}
{\cal N}_w=
\begin{cases}
4N-2p-2q+1, & {\rm if~} w\in S_1;\\
N-p-q+1,   & {\rm if~}   w\in S_2;\\
2N-2p-q+1, & {\rm if~} w\in S_3;\\
2N-p-2q+1, & {\rm if~} w\in S_4.
\end{cases}
\end{equation}
Next, we consider the map $\sigma: \mathbb{Z}_N\rightarrow \mathbb{Z}_q$ defined
by $\sigma(w)=pw\bmod q$ for every $w\in \mathbb{Z}_N$.
It is not hard to verify that
$\left. \sigma\right|_{S_1}$ and $\left. \sigma\right|_{S_4}$ are zero maps,
$\left. \sigma\right|_{S_2}$ is a $(p-1)$-to-1 map with range $\mathbb{Z}_q^*$, and
$\left. \sigma\right|_{S_3}$ is a 1-to-1 map with range $\mathbb{Z}_q^*$.  It follows that
$\sigma^{-1}(0)=S_1\cup S_4$, and
\begin{equation}
\label{equation:preimage}
\begin{split}
\sigma^{-1}(\omega)&\subseteq S_2\cup S_3,\\
|\sigma^{-1}(\omega)\cap S_2|&=p-1,\\|\sigma^{-1}(\omega)
\cap S_3|&=1
\end{split}
\end{equation}
for every $\omega\in \mathbb{Z}_q^*$. Due to (\ref{equation:card}) and (\ref{equation:preimage}),
the distribution of  $Z$ can be described by
\begin{equation*}
\begin{split}
\Pr[Z=0]&=\frac{1}{N^2}((4N-2p-2q+1)|S_1|+(2N-p-2q+1)|S_4|)=\frac{p(2N-p)}{N^2} {\rm~and~}\\
\Pr[Z=\omega]&=\frac{1}{N^2}((N-p-q+1)(p-1)+(2N-2p-q+1))=\frac{p^2(q-1)}{N^2},
\end{split}
\end{equation*}
where $\omega\in \mathbb{Z}_q^*$ is arbitrary. It follows that
$$\sum_{\omega\in \mathbb{Z}_q}\big|\Pr[Z=\omega]-\Pr[U=\omega]\big|=\left|\frac{p(2N-p)}{N^2}-
\frac{1}{q}\right|+
\left|\frac{p^2(q-1)}{N^2}-
\frac{1}{q}\right|(q-1)=\frac{2(q-1)}{q^2},$$
which is negligible in the security parameter
$\lambda$ as $q$ is a $\lambda$-bit prime.

\section{Proof for Lemma \ref{lemma:3-co-cdhs}}
\label{appendix:3-co-cdhs}

{\bf Lemma 2.3}
{\em
If {3-MDDH} holds, then  {3-co-CDHS} also holds.
}

\vspace{0.25cm}

 Suppose that the 3-co-CDHS does not hold. Then
 there is a PPT adversary
${\cal A}$ such that
\begin{equation}
\label{equation:adv}
\Pr[{\cal A}(p,q,\Gamma, h_1^a, h_2^b)=h_2^{ab}]
\geq \epsilon,
\end{equation}
where $\epsilon$ is non-negligible,  and
the probability is taken over $a,b\leftarrow \mathbb{Z}_N$ and
the random coins of ${\cal A}$. We shall construct a  PPT adversary ${\cal B}$ that breaks the 3-MDDH. The adversary
${\cal B}$ is given $(p,q,\Gamma)$ along with five
group elements $t=g_1^s, \alpha=g_1^a, \beta=g_1^b, \gamma=g_1^c$ and $h\leftarrow G_3$,
where $s,a,b,c\leftarrow  \mathbb{Z}_N$.
It must decide whether  $h=g_3^{sabc}$ or not.
In order to do so,  ${\cal B}$   computes $u=\alpha^p,
 v=e(\beta,\gamma)^p=g_2^{pbc}$,  gives
$(p,q,\Gamma,u,v)$ to ${\cal A}$ and receives $w$ from
${\cal A}$.
 Clearly, $u\in G_1$ is   is a random element of order  $q$.
 On the other hand,  due to Lemma \ref{lemma:tech},
the distribution of $v=g_2^{pbc}$ is statistically close to the uniform distribution
over the order $q$ subgroup of $G_2$.
Therefore, equation (\ref{equation:adv})  implies that
$$\Pr[w=g_2^{pabc}]\geq \epsilon-{\sf neg}(\lambda),$$ where the probability is taken over
 $a,b,c\leftarrow \mathbb{Z}_N$ and
the random coins of ${\cal A}$. Given $t\in G_1$ and $w\in G_2$, ${\cal B}$
will compute $\sigma=e(t,w)$ and compare with $\tau=h^p$. At last, ${\cal B}$
will output 1 if $\sigma=\tau$ and a random bit $\psi\in \{0,1\}$ otherwise.
Let $\mathbb{E}_1$ be the event that ${\cal B}(p,q,\Gamma, t,\alpha,\beta,\gamma,h)=1$
when $h=g_3^{sabc}$. Then
 \begin{equation}
 \label{equation:h-not-random}
 \begin{split}
 \Pr[\mathbb{E}_1]&=
 \Pr[\mathbb{E}_1| w=g_2^{pabc}]
 \Pr[w=g_2^{pabc}]+\Pr[\mathbb{E}_1|
 w\neq g_2^{pabc}]
 \Pr[w\neq g_2^{pabc}]\\
 &\geq \Pr[w=g_2^{pabc}]+\frac{1}{2}(1- \Pr[w=g_2^{pabc}])\geq \frac{1}{2}+\frac{1}{2}\epsilon-{\sf neg}(\lambda).
 \end{split}
 \end{equation}
Let $\mathbb{E}_2$ be the event that ${\cal B}(p,q,\Gamma,
   t,\alpha,\beta,\gamma,h)=1$
when $h\leftarrow G_3$ is a random group element. Then
  \begin{equation}
  \label{equation:h-random}
 \begin{split}
 \Pr[\mathbb{E}_2]&=
 \Pr[\mathbb{E}_2| w=g_2^{pabc}]
 \Pr[w=g_2^{pabc}]+\Pr[\mathbb{E}_2|w\neq g_2^{pabc}]
 \Pr[w\neq g_2^{pabc}].
 \end{split}
 \end{equation}
Let $h=g_3^{\delta}$ for  $\delta\leftarrow \mathbb{Z}_N$.
Let $\eta\in \mathbb{Z}_N$ be fixed.
It is not hard to see that  $\Pr[h^p=g_3^{p\eta}]=
\Pr[\delta\equiv \eta\bmod q]=1/q$ and $\Pr[h^p=g_3^{\eta}]=
\Pr[p\delta\equiv \eta\bmod N]\leq 1/q$ where the probability is taken over
$\delta\in \mathbb{Z}_N$. It follows that
\begin{equation}
\label{equation:h-random-1}
\begin{split}
\Pr[\mathbb{E}_2| w=g_2^{pabc}]
&=
\Pr[h^p=g_3^{psabc}]\leq \frac{1}{q};\\
\Pr[\mathbb{E}_2|w\neq g_2^{pabc}]
 &=\Pr[\mathbb{E}_2, h^p=e(t,w)|w\neq g_2^{pabc}]+\Pr[\mathbb{E}_2,
 h^p\neq e(t,w)|w\neq g_2^{pabc}]\\
 &=\Pr[h^p=e(t,w)|w\neq g_2^{pabc}]+\frac{1}{2}\Pr[h^p\neq e(t,w)|w\neq g_2^{pabc}]\\
 &=\frac{1}{2}(1+\Pr[h^p=e(t,w)|w\neq g_2^{pabc}])
 \leq \frac{1}{2}+\frac{1}{2q}.
\end{split}
\end{equation}
Due to  (\ref{equation:h-random}) and (\ref{equation:h-random-1}), we have that
\begin{equation}
\label{equation:h-random-2}
\begin{split}
\Pr[\mathbb{E}_2]&\leq \frac{1}{q} \Pr[ w=g_2^{pabc}]
+\left(\frac{1}{2}+\frac{1}{2q}\right)(1- \Pr[w=g_2^{pabc}])\\
&\leq \frac{1}{q}+\left(\frac{1}{2}+\frac{1}{2q}\right)(1-\epsilon+{\sf neg}(\lambda))
\leq \frac{1}{2}-\frac{1}{2}\epsilon+{\sf neg}(\lambda),
\end{split}
\end{equation}
where the second inequality follows from $\Pr[w=g_2^{pabc}]\geq \epsilon-{\sf neg}(\lambda)$
and the third inequality follows from the fact that $1/q$ is negligible in $\lambda $.
Due to (\ref{equation:h-not-random}) and
(\ref{equation:h-random-2}), we have that
$$|\Pr[\mathbb{E}_1]-\Pr[\mathbb{E}_2]|\geq \epsilon-{\sf neg}(\lambda),$$
which says that 3-MDDH assumption can be broken  with advantage
at least $\epsilon-{\sf neg}(\lambda)$, which is non-negligible
as long as $\epsilon$ is non-negligible.
Hence, the 3-co-CDHS assumption must hold as long as the  3-MDDH
assumption holds.

\section{Proof for Lemma \ref{theorem:privacy-PE}}
\label{appendix:privacy-pe}

{\bf Lemma 3.3}
{\em
If {SDA} holds for $\Gamma$, then the scheme
$\Pi_{\rm pe}$ achieves the input privacy.
}

\vspace{0.25cm}

Let $\Gamma$ be the multilinear map instance
in $\Pi_{\rm pe}$. Given any input $\alpha$,
the only message  that contains information about $\alpha$ is $\sigma=(\sigma_1,\ldots, \sigma_k)$,
which is sent to the server by the client.
 Suppose that there is an adversary
${\cal A}$ that  breaks the input privacy of our scheme. Then
${\cal A}$ must succeed  with a non-negligible
advantage $\epsilon$   in
${\sf Exp}_{\cal A}^{\sf Pri}(\Pi,f,\lambda)$, i.e.,
$\Pr[{\sf Exp}_{\cal A}^{\sf Pri}(\Pi,f,\lambda)=1]\geq \frac{1}{2}+\epsilon$.
Below we construct a simulator $\cal S$ that breaks the
 semantic security of ${\rm BGN}_{2k+1}$  with  non-negligible advantage
 $\geq \epsilon/k$. In the semantic security game of ${\rm BGN}_{2k+1}$, the
 challenger stores the secret key $p$ locally and gives
 a public key $(\Gamma, g_1, h)$ to $\cal S$.
To break ${\rm BGN}_{2k+1}$, the simulator $\cal S$ simulates $\cal A$ as below:
\begin{itemize}
\item[(A)] Mimic ${\sf ProbGen}$: Picks a polynomial $f(x)=f_0+f_1x+\cdots+f_nx^n$. Picks $s\leftarrow
 \mathbb{Z}_N$ and computes $t=g_1^{f(s)}$.
Stores  $sk=(s,t)$ and gives
 $pk=(\Gamma, g_1,h, g_1^s, g_1^{s^2}, \ldots, g_1^{s^{2^{k-1}}},f)$
 to $\cal A$. Note that $sk$ does not include $p,q$ as components of $sk$.
 This is because $p$ and $q$ are neither known to the simulator
 $\cal S$ nor used by the simulator $\cal S$.

\item[(B)] Mimic ${\sf PubProbGen}(sk,\cdot)$: Upon receiving a query $\alpha\in \mathbb{D}$,
 picks $r_\ell\leftarrow \mathbb{Z}_N$
and computes $\sigma_\ell=g_1^{\alpha^{2^{\ell-1}}}h^{r_\ell}$ for every $\ell\in [k]$.
Then gives $\sigma=(\sigma_1,\ldots, \sigma_k)$ to $\cal A$.
\item[(C)] Upon receiving $\alpha_0, \alpha_1 \in \mathbb{D}$  from  $\cal A$,
the simulator $\cal S$  picks $i\leftarrow \{0,1,\ldots,k-1\}$ and sends
$\beta_0=\alpha_0^{2^i}$ and $\beta_1=\alpha_1^{2^i}$ to the  challenger.
 The challenger will pick $b\leftarrow \{0,1\}$ and
sends ${\sf Enc}(\beta_b)$ to $\cal S$.
Upon receiving ${\sf Enc}(\beta_b)$ from the challenger.
The simulator $\cal S$ gives
$$Z=({\sf Enc}(\alpha_1), \ldots, {\sf Enc}(\alpha_1^{2^{i-1}}), {\sf Enc}(\beta_b),
{\sf Enc}(\alpha_0^{2^{i+1}}), \ldots, {\sf Enc}(\alpha_0^{2^{k-1}}))$$
to $\cal A$ and learns a bit $b^\prime$ in return.
\item[(D)] The simulator $\cal S$  outputs $\hat{b}=1$ if $b^\prime =1$ and
$\hat{b}=0$ otherwise. It  wins if $\hat{b}=b$.
\end{itemize}
For every $j\in \{0,1,\ldots, k\}$,  we define the following probability ensemble
$$Z_j=({\sf Enc}( \alpha_1),\ldots, {\sf Enc}( \alpha_1^{2^{j-1}}), {\sf Enc}( \alpha_0^{2^j}),
\ldots, {\sf Enc}(\alpha_0^{2^{k-1}})).$$
 Then
$
Z_0=({\sf Enc}( \alpha_0),{\sf Enc}( \alpha_0^{2}),
\ldots, {\sf Enc}(\alpha_0^{2^{k-1}}))$ and $
Z_k=({\sf Enc}( \alpha_1), {\sf Enc}( \alpha_1^{2}),
\ldots, {\sf Enc}(\alpha_1^{2^{k-1}}))
$.
The  inequality $\Pr[{\sf Exp}_{\cal A}^{\sf Pri}(\Pi,f,\lambda)=1]\geq
\frac{1}{2}+\epsilon$ implies that
  $\Pr[{\cal A}(pk,Z_k)=1]-
\Pr[{\cal A}(pk, Z_0)=1]\geq 2\epsilon$.
% It follows that there must exist an integer $l\in \{0,1,\ldots,k-1\}$
%such that
%\begin{equation}
%\label{equation:hybrid}
%\Pr[{\cal A}(pk, Z_{l+1})=1]-
%\Pr[{\cal A}(pk, Z_{l})=1]\geq 2\epsilon/k.
%\end{equation}
%Note that the $i$ chosen by $\cal S$  will be equal to $l$ with probability
%$1/k$.  Furthermore, we have that
Note that $Z=Z_i$ when $b=0$ and
 $Z=Z_{i+1}$ when $b=1$.
%
%Note that (A) corresponds to the step 1 of ${\sf Exp}_{\cal A}^{\sf Pri}(\Pi,f,\lambda)$
%and (B) models the oracle $\sf PubProbGen$.
Let $\mathbb{E}$
be the event that $\hat{b}=b$. Then
\begin{equation*}
\begin{split}
\Pr[\mathbb{E}]&=
\sum_{l=0}^{k-1}\Pr[\mathbb{E}|i=l]\Pr[i=l]=
\frac{1}{k}\sum_{l=0}^{k-1} (\Pr[\mathbb{E}|i=l,b=0]\Pr[b=0]+\Pr[\mathbb{E}|i=l,b=1]\Pr[b=1])\\
&=
\frac{1}{2k}\sum_{l=0}^{k-1} (\Pr[b^\prime=0|i=l,b=0]+\Pr[b^\prime=1|i=l,b=1])\\
&=
\frac{1}{2k}\sum_{l=0}^{k-1} (1-\Pr[{\cal A}(pk,Z_l)=1]+\Pr[{\cal A}(pk,Z_{l+1})=1])\\
&=\frac{1}{2}+\frac{1}{2k}(\Pr[{\cal A}(pk,Z_k)=1]-\Pr[{\cal A}(pk,Z_0)=1])
\geq \frac{1}{2}+\frac{\epsilon}{k},
\end{split}
\end{equation*}
i.e.,  ${\cal B}$
breaks the semantic security of ${\rm BGN}_{2k+1}$ with a non-negligible advantage $\epsilon/k$,
 which is not true when SDA holds. Hence, $\Pi_{\rm pe}$ achieves input privacy.

\section{Proof for Lemma \ref{theorem:privacy-MM}}

{\bf Lemma 3.6}
{\em
If  the {SDA}  for $\Gamma$ holds,
then  $\Pi_{\rm mm}$
achieves the input privacy.
}

\vspace{0.25cm}
We define  three games ${\sf G}_0,{\sf G}_1$ and ${\sf G}_2$ as below:
\begin{itemize}
\item[${\sf G}_0:$] this  is the standard security game
              ${\sf Exp}_{\cal A}^{\sf Pri}(\Pi,M,\lambda)$ defined in
              {Fig.} \ref{figure:experiment}.
\item[${\sf G}_1:$] the only difference between this game and ${\sf G}_0$
is a change to  {\sf ProbGen}. For any $(x_1,\ldots,x_n)$ queried  by the adversary, instead of computing
$\tau$ using the efficient {\sf CFE} algorithm, the inefficient evaluation of $\tau_i$ is used,
i.e., $\tau_i=\prod_{j=1}^n e({\sf F}_K(i,j)^{x_j}, g_2^{p})$ for every $i\in[n]$.
\item[${\sf G}_2:$] the only difference between this game and ${\sf G}_1$ is that
the matrix $T$ is computed as $T_{ij}=g_1^{p^2aM_{ij}}\cdot R_{ij}$,
where $R_{ij}\leftarrow G_1$  for every $i,j\in[n]$.
\end{itemize}
For every $i\in \{0,1,2\}$, we denote by
${\sf G}_i({\cal A})$ the output of game $i$ when it is run with an adversary ${\cal A}$.
The proof of the theorem proceeds by a standard hybrid argument, and is obtained by combining
the proofs of the following three claims.

\vspace{1mm}
\noindent\underline{\rm Claim 1}. We have that $\Pr[{\sf G}_0({\cal A})=1]=
\Pr[{\sf G}_1({\cal A})=1]$.

The only difference between ${\sf G}_1$ and  ${\sf G_0}$
is in the
computation of $\tau$.   Due to the correctness of the ${\sf CFE}$
algorithm,  such difference does not change the distribution of the
values  $\tau$ returned to the adversary. Therefore,
 the probabilities that ${\cal A}$
wins in both games are identical.

\vspace{0.1cm}
\noindent\underline{\rm Claim 2}. We have that $|\Pr[{\sf G}_1({\cal A})=1]-
\Pr[{\sf G}_2({\cal A})=1]|<{\sf neg}(\lambda)$.

The only difference between  ${\sf G}_2$ and ${\sf G_1}$ is
 that we replace the pseudorandom group elements  ${\sf F}_K(i,j)$ with truly random
 group elements $R_{ij}\leftarrow G_1$ for every $i,j\in [n]$.
Clearly, if $|\Pr[{\sf G}_1({\cal A})=1]-
\Pr[{\sf G}_2({\cal A})=1]|$ is non-negligible, we can construct an simulator ${\cal B}$ that
simulates  ${\cal A}$ and breaks the pseudorandom property of  $\sf PRF$
with a non-negligible advantage.

\vspace{0.1cm}
\noindent\underline{\rm Claim 3}. We have that $\Pr[{\sf G}_2({\cal A})=1]<
{\sf neg}(\lambda)$.

Let $\Gamma$ be the multilinear map instance
in $\Pi_{\rm mm}$. Given any input $x$,
the only message  that contains information about $x$ is $\sigma=(\sigma_1,\ldots, \sigma_n)$,
which is sent to the server by the client.
 Suppose that there is an adversary
${\cal A}$ that  breaks the input privacy of our scheme. Then
${\cal A}$ must succeed  with a non-negligible
advantage $\epsilon$   in
${\sf Exp}_{\cal A}^{\sf Pri}(\Pi,M,\lambda)$, i.e.,
$\Pr[{\sf Exp}_{\cal A}^{\sf Pri}(\Pi,M,\lambda)=1]\geq \frac{1}{2}+\epsilon$.
Below we construct a simulator $\cal S$ that breaks the
 semantic security of ${\rm BGN}_{3}$  with  non-negligible advantage
 $\geq \epsilon/n$. In the semantic security game of ${\rm BGN}_{3}$, the
 challenger stores the secret key $p$ locally and gives
 a public key $(\Gamma, g_1, h)$ to $\cal S$.
To break ${\rm BGN}_{3}$, the simulator $\cal S$ simulates $\cal A$ as below:
\begin{itemize}
\item[(A)] Mimic ${\sf ProbGen}$: Picks a
square matrix $M=(M_{ij})$ of order $n$.
 Picks $a\leftarrow
 \mathbb{Z}_N$. Computes $T_{ij}=g_1^{p^2aM_{ij}}R_{ij}$ for every
 $(i,j)\in [n]^2$, where $R_{ij}\leftarrow G_1$.
Stores  $sk=(a,\eta)$ and gives
 $pk=(\Gamma, g_1,h, M,T)$
 to $\cal A$, where $\eta=g_3^{p^2a}$.  Note that $sk$ does not include $p,q$ as components of $sk$.
 This is because $p$ and $q$ are neither known to the simulator
 $\cal S$ nor used by the simulator $\cal S$.

\item[(B)] Mimic ${\sf PubProbGen}(sk,\cdot)$: Upon receiving a query $x=(x_1,\ldots,x_n)\in \mathbb{D}$,
 picks $r_j\leftarrow \mathbb{Z}_N$
and computes $\sigma_j=g_1^{x_j}h^{r_j}$ for every $j\in [n]$.
Then gives $\sigma=(\sigma_1,\ldots, \sigma_n)$ to $\cal A$.
\item[(C)] Upon receiving $x^0, x^1 \in \mathbb{D}$  from  $\cal A$,
the simulator $\cal S$ picks $i\leftarrow [n]$ and sends
$\beta_0=x^0_i$ and $\beta_1=x^1_i$ to the  challenger (Here without loss of generality,
we can suppose that the Hamming distance between $x^0$ and $x^1$ is $n$. Otherwise, we can
ignore  their equal components and  only consider
the different components.).
 The challenger will pick $b\leftarrow \{0,1\}$ and
sends ${\sf Enc}(\beta_b)$ to $\cal S$.
Upon receiving ${\sf Enc}(\beta_b)$ from the challenger.
The simulator $\cal S$ gives
$$Z=({\sf Enc}(x^1_1), \ldots, {\sf Enc}(x^1_{i-1}), {\sf Enc}(\beta_b),
{\sf Enc}(x^0_{i+1}), \ldots, {\sf Enc}(x^0_n))$$
to $\cal A$ and learns a bit $b^\prime$ in return.
\item[(D)] The simulator $\cal S$  outputs $\hat{b}=1$ if $b^\prime =1$ and
$\hat{b}=0$ otherwise. It  breaks ${\rm BGN}_3$ if $\hat{b}=b$.
\end{itemize}
For every $j\in \{0,1,\ldots,n\}$,  we define the following probability ensemble
$$Z_j=({\sf Enc}(x^1_1), \ldots, {\sf Enc}(x^1_{j}), {\sf Enc}(x^0_{j+1}), \ldots, {\sf Enc}(x^0_n)).$$
 Then
$
Z_0=({\sf Enc}(x^0_1),{\sf Enc}(x^0_2),
\ldots, {\sf Enc}(x^0_n))$ and $
Z_n=({\sf Enc}(x^1_1), {\sf Enc}(x^1_2),
\ldots, {\sf Enc}(x^1_n))
$.
The  inequality $\Pr[{\sf Exp}_{\cal A}^{\sf Pri}(\Pi,M,\lambda)=1]\geq
\frac{1}{2}+\epsilon$ implies that
  $\Pr[{\cal A}(pk,Z_n)=1]-
\Pr[{\cal A}(pk, Z_0)=1]\geq 2\epsilon$.
% It follows that there must exist an integer $l\in \{0,1,\ldots,k-1\}$
%such that
%\begin{equation}
%\label{equation:hybrid}
%\Pr[{\cal A}(pk, Z_{l+1})=1]-
%\Pr[{\cal A}(pk, Z_{l})=1]\geq 2\epsilon/k.
%\end{equation}
%Note that the $i$ chosen by $\cal S$  will be equal to $l$ with probability
%$1/k$.  Furthermore, we have that
Note that $Z=Z_{i-1}$ when $b=0$ and
 $Z=Z_{i}$ when $b=1$.
%
%Note that (A) corresponds to the step 1 of ${\sf Exp}_{\cal A}^{\sf Pri}(\Pi,f,\lambda)$
%and (B) models the oracle $\sf PubProbGen$.
Let $\mathbb{E}$
be the event that $\hat{b}=b$. Then
\begin{equation*}
\begin{split}
\Pr[\mathbb{E}]&=
\sum_{l=1}^{n}\Pr[\mathbb{E}|i=l]\Pr[i=l]=
\frac{1}{n}\sum_{l=1}^{n} (\Pr[\mathbb{E}|i=l,b=0]\Pr[b=0]+\Pr[\mathbb{E}|i=l,b=1]\Pr[b=1])\\
&=
\frac{1}{2n}\sum_{l=1}^{n} (\Pr[b^\prime=0|i=l,b=0]+\Pr[b^\prime=1|i=l,b=1])\\
&=
\frac{1}{2n}\sum_{l=1}^{n} (1-\Pr[{\cal A}(pk,Z_{l-1})=1]+\Pr[{\cal A}(pk,Z_{l})=1])\\
&=\frac{1}{2}+\frac{1}{2n}(\Pr[{\cal A}(pk,Z_n)=1]-\Pr[{\cal A}(pk,Z_0)=1])
\geq \frac{1}{2}+\frac{\epsilon}{n},
\end{split}
\end{equation*}
i.e.,  ${\cal B}$
breaks the semantic security of ${\rm BGN}_{3}$ with a non-negligible advantage $\epsilon/n$,
 which is not true when SDA holds. Hence, $\Pi_{\rm mm}$ achieves input privacy.

\section{Privacy Preserving VC Schemes}
\label{appendix:modified-vc}

This section shows the  modifications of our VC schemes
$\Pi_{\rm pe}$ and $\Pi_{\rm mm}$  that achieve both input and function privacy.

\begin{figure}[H]
\begin{center}
\begin{boxedminipage}{17cm}
\begin{itemize}
\item  $\mathsf{KeyGen}(1^\lambda, f(x))$:
Pick
 $\Gamma=(N, G_1,\ldots,   G_{2k+2},   e,  g_1, \ldots,  g_{2k+2})\leftarrow
 {\cal G}(1^\lambda, 2k+2)$.
Pick  $ s\leftarrow \mathbb{Z}_N$ and compute
 $t=g_1^{f(s)}$.
 Pick
$u\leftarrow G_1$ and compute $h=u^q$, where $u=g_1^{\delta}$ for  an integer
$\delta\in \mathbb{Z}_N$. Set up ${\rm BGN}_{2k+2}$  with public key
$(\Gamma,g_1, h)$ and secret key $p$.
For every $i\in\{0,1,\ldots,n\}$, pick $v_i\leftarrow \mathbb{Z}_N$ and
compute $\gamma_i=g_1^{f_i}h^{v_i}$.
Output  $sk=(p,q, s, t)$ and
$pk=(\Gamma, g_1,h,  g_1^s, g_1^{s^2},  \ldots, g_1^{s^{2^{k-1}}}, \gamma)$, where $\gamma=(\gamma_0,\ldots, \gamma_n)$.
\item ${\sf ProbGen}(sk,\alpha)$:   For every $\ell\in [k]$, pick
 $r_\ell \leftarrow \mathbb{Z}_N$ and compute
 $\sigma_\ell=g_1^{\alpha^{2^{\ell-1}}}h^{r_\ell}$. Output
 $\sigma=(\sigma_1,\ldots, \sigma_k)$ and  $\tau=\perp$
  ($\tau$ is  not used).
\item ${\sf Compute}(pk,\sigma)$: Compute
$\rho_i=g_{k}^{\mu_i}$ for  every   $i\in \{0,1,\ldots,n\}$ using the technique in Section
\ref{section:BGN}.
Compute $\rho^\prime_i=e(\gamma_i, \rho_i)=g_{k+1}^{\mu^\prime_i}$, where
$\mu^\prime_i=(f_i+q\delta v_i)\mu_i$.
Compute  $\rho=\prod_{i=0}^n \rho^\prime_i$.
Compute $\pi_{ij}=g_{2k}^{\nu_{ij}}$
 using the technique in
Section \ref{section:BGN} for every $i\in \{0,1,\ldots,n-1\}$ and
$j\in\{0,1,\ldots,i\}$.
Compute $\pi^\prime_{ij}=e(\gamma_{i+1}, \pi_{ij})=g_{2k+1}^{\nu^\prime_{ij}}$, where
$\nu^\prime_{ij}=(f_{i+1}+q\delta v_{i+1})\nu_{ij}$.
Set $\pi=\prod_{i=0}^{n-1}\prod_{j=0}^i \pi^\prime_{ij}$.Output  $\rho$ and  $\pi$.
\item ${\sf Verify}(sk, \tau, \rho,\pi)$: Compute  the $y\in \mathbb{Z}_q$ such that
$\rho^p=(g_{k+1}^p)^y$.
If the equality
$
e\big(t/g_1^{y}  , g_{2k+1}^{p}\big)=
e\big(g_1^s/g_1^\alpha, \pi^{p}\big)
$
 holds,
output $y$; otherwise,
output $\perp$.

\end{itemize}
\end{boxedminipage}
\vspace{-2mm}
\caption{Univariate polynomial evaluation ($\Pi_{\rm pe}^\prime$)}
\label{figure:pp-vc-pe}
\end{center}
\end{figure}

\begin{figure}[H]
\begin{center}
\vspace{0cm}
\begin{boxedminipage}{17cm}
\begin{itemize}
\item ${\sf KeyGen}(1^\lambda,M)$:
Pick $\Gamma=(N,G_1,G_2, G_3,e,g_1,g_2,g_3)\leftarrow {\cal G}(1^\lambda,3)$.
Consider the ${\sf PRF}_{\rm dlin}$  in Section \ref{section:aprf}.
Run  ${\sf KG}(1^\lambda,n)$
and pick a  secret key $K$.
Pick  $a\leftarrow \mathbb{Z}_N$ and compute
 $T_{ij}=g_1^{p^2aM_{ij}}\cdot {\sf   F}_K(i,j)$ for every  $(i,j)\in[n]^2$.
 Pick  $u\leftarrow G_1$ and compute
$ h=u^q$,  where $u=g_1^{\delta}$ for  an integer
$\delta\in \mathbb{Z}_N$.
Set up ${\rm BGN}_{3}$ with public key
$(\Gamma,g_1,h)$ and secret key $p$.
For every $(i,j)\in [n]^2$, pick $v_{ij}\leftarrow \mathbb{Z}_N$
and compute $\gamma_{ij}=g_1^{M_{ij}}h^{v_{ij}}$.
Output  $sk=(p,q,K, a, \eta)$ and   $pk=(\Gamma, g_1,h, \gamma,T)$, where $\eta=g_3^{p^2 a}$ and $\gamma=(\gamma_{ij})$.

\item ${\sf ProbGen}(sk,x)$:
For every $j\in[n]$, pick  $r_j\leftarrow \mathbb{Z}_N$
and compute  $\sigma_j=g_1^{x_j}h^{r_j}$.
For every $i\in[n]$, compute
 $\tau_i= e(\prod_{j=1}^n{\sf F}_K(i,j)^{x_j}, g_2^{p})$ using the efficient ${\sf CFE}$
 algorithm in Section \ref{section:aprf}.
Output  $\sigma=(\sigma_1,\ldots,\sigma_n)$ and
 $\tau=(\tau_1,\ldots, \tau_n)$.
\item ${\sf Compute}(pk,\sigma)$:
Compute
 $\rho_i=\prod_{j=1}^n e(\gamma_{ij}, \sigma_j)$ and
 $\pi_i=\prod_{j=1}^n e(T_{ij},\sigma_j)$ for every $i\in[n]$. Output
  $\rho=(\rho_1,\ldots,\rho_n)$ and
 $\pi=(\pi_1,\ldots, \pi_n)$.
\item ${\sf Verify}(sk,\tau, \rho,\pi)$:
For every $i\in[n]$, compute  $y_i$ such that $\rho_i^p=(g_2^p)^{y_i}$.
If
$
e(\pi_i, g_1^p)=\eta^{py_i}\cdot \tau_i
$
 for every $i\in[n]$, then output $y=(y_1,\ldots,y_n)$;
otherwise, output $\perp$.
\end{itemize}
\end{boxedminipage}
\vspace{-2mm}
\caption{Matrix multiplication ($\Pi_{\rm mm}^\prime$)}
\label{figure:pp-vc-mm}
\end{center}
\end{figure}

\end{document}